\newcommand{\blind}{1}
\newtheorem{proposition}{Proposition}
\newenvironment{proof}[1][Proof]{\noindent \textbf{#1.} }{\  \rule{0.5em}{0.5em}}
\newcommand{\E}{\mathrm{E}}
\begin{document}

\def\spacingset#1{\renewcommand{\baselinestretch}%
{#1}\small\normalsize} \spacingset{1}


\if1\blind
{
  \title{\bf Why you should also use OLS estimation of tail exponents}

 \author{Thiago Trafane Oliveira Santos\hspace{.2cm}\\
    Central Bank of Brazil, Brasília, Brazil\\ Department of Economics, University of Brasilia, Brazil.\\ Email: thiago.trafane@bcb.gov.br\\
    and \\
    Daniel Oliveira Cajueiro\thanks{DOC thanks to CNPq for partial financial support (304706/2023-0).} \\
    Department of Economics, University of Brasilia, Brazil.\\  National Institute of Science and Technology for Complex Systems (INCT-SC)\\Machine Learning Laboratory in Finance and Organizations (LAMFO), Brazil.\\ Email: danielcajueiro@gmail.com}
  \maketitle
} \fi

\if1\blind
{
  \bigskip
  \bigskip
  \bigskip
  \begin{center}
\end{center}
  \medskip
} \fi

\bigskip
\begin{abstract}
			Even though practitioners often estimate Pareto exponents running OLS rank-size regressions, the usual recommendation is to use the Hill MLE with a small-sample correction instead, due to its unbiasedness and efficiency. In this paper, we advocate that you should also apply OLS in empirical applications. On the one hand, we demonstrate that, with a small-sample correction, the OLS estimator is also unbiased. On the other hand, we show that the MLE assigns significantly greater weight to smaller observations. This suggests that the OLS estimator may outperform the MLE in cases where the distribution is (i) strictly Pareto but only in the upper tail or (ii) regularly varying rather than strictly Pareto. We substantiate our theoretical findings with Monte Carlo simulations and real-world applications, demonstrating the practical relevance of the OLS method in estimating tail exponents.\end{abstract}

\noindent%
{\it Keywords:}  tail exponent, OLS rank-size regression, Hill MLE, bias.
\vfill

\newpage
\spacingset{1.8} 

	\section{Introduction} \label{sec:intro}	

	Numerous studies have found empirical evidence supporting a Pareto distribution for various economic and financial variables, at least in their upper tails. Notable examples include personal income and wealth (e.g., \citealt{pareto1896cours} and \citealt{klass_etal2006forbes}), population of cities (e.g., \citealt{gabaix1999zipf} and \citealt{gabaix_ioannides2004evolution}), size of firms (e.g., \citealt{okuyama_etal1999zipf}, \citealt{axtell2001zipf}, \citealt{fujiwara_etal2004pareto}, \citealt{luttmer2007selection}, \citealt{gabaix_landier2008why}, \citealt{digiovanni_etal2011power}, and \citealt{digiovanni_levchenko2013firm}), and financial returns and trading volume (e.g., \citealt{gopikrishnan_etal1999scaling} and \citealt{gopikrishnan_etal2000statistical}). 

	If a random variable $S$, $S\geq \underline{s}>0$, is Pareto distributed with a tail exponent $k>0$, its survival function is given by
	\begin{align} 
		P(S\geq s) =  & \left(\underline{s}/s\right)^{k} \label{eq:sf_pareto}
	\end{align}
	for $s \geq \underline{s}$. The Pareto distribution is characterized by its fat tails, and the tails are heavier for lower values of the tail exponent $k$. Specifically, the $m$-th moment exists if and only if $k>m$, meaning no moment exists for $k \in (0,1]$. This characteristic underscores the importance of accurately estimating the Pareto tail exponent. Practitioners often estimate it using OLS rank-size regressions. However, the prevailing recommendation in the literature (e.g., \citealt{clauset_etal2009power}) is to use the maximum likelihood estimator (MLE) proposed by \cite{hill1975simple}, with a small-sample correction. This preference is due to the MLE's properties of unbiasedness and efficiency, as it is the best linear unbiased estimator (BLUE) and also the minimum variance unbiased estimator (MVUE) (\citealt{aban_meerschaert2004generalized}).
	
	In this paper, we advocate for the use of OLS estimation of tail exponents in addition to MLE. Our reasoning is based on two theoretical results. First, we establish that the OLS estimator of $d \equiv 1/k$ is also unbiased with a small-sample correction. This correction is straightforward: it consists in multiplying the plain OLS estimator by a function of the sample size. In large samples, this function equals one, so the unbiased OLS estimator retains the same asymptotic properties as the plain OLS method, such as consistency and normality. In small samples, however, this function is strictly below one, correcting the positive bias of the plain OLS estimator. This result relates to \cite{gabaix_ibragimov2011rank}, which suggests a modification of the standard rank-size regression to deal with the OLS bias. However, while their procedure only mitigates the problem, ours completely eliminates it. Second, we demonstrate a close numerical relationship between ML and OLS estimators, despite their distinct theoretical foundations. The MLE of $d$ can be obtained from the rank-size regression by applying (i) a different small-sample adjustment and (ii) weighted least squares (WLS) instead of OLS. This result is equivalent to the WLS interpretation of the MLE provided in \cite{beirlant_etal1996tail}. Remarkably, the difference in small-sample corrections is typically small. Consequently, the main source of numerical difference between these OLS and ML estimators lies in the weighting scheme, with MLE assigning greater importance to smaller observations, particularly in large samples.
	
	Our first result addresses the small-sample bias of the OLS estimator, making it a viable alternative for estimating tail exponents. Nevertheless, the MLE remains both BLUE and MVUE. Therefore, if the distribution is strictly Pareto over the entire support, you should still use the MLE. However, the distribution may be just Pareto-like rather than strictly Pareto, being strictly Pareto only in the upper tail or regularly varying. And, in such cases, OLS methods may perform better, especially in large samples, because the MLE assigns greater weight to the smallest, non-Pareto, observations, as implied by our second theoretical result. Consequently, the MLE will be more biased, a finding that we illustrate through Monte Carlo simulations. Additionally, our second theoretical result indicates that the MLE should be less robust to the choice of estimation support. This was observed by \cite{aban_meerschaert2004generalized} in their empirical investigation of the daily trading volume of Amazon, Inc. stock. \cite{kratz_resnick1996qq} also found this in both empirical and Monte Carlo exercises. We confirm this finding through our own empirical exercises, studying (i) the distribution of US cities by population and (ii) the distribution of S\&P absolute daily returns. Hence, OLS provides more stable estimates across different supports, a desirable feature when the distribution is (approximately) Pareto only in the upper tail, as considering different supports is an effective way to deal with the uncertainty regarding the appropriate sample.
		
	All in all, the MLE should be applied when you are certain that the distribution under study is strictly Pareto. If the distribution could be just Pareto-like, OLS methods may be preferable. The issue is that, in practice, distinguishing between a strictly Pareto distribution and a Pareto-like distribution can be nearly impossible, making it difficult to determine which estimator is most appropriate. This is why you should \textit{also} use OLS estimation of tail exponents.
		
	\bigskip
	The remainder of the paper proceeds as follows. Section \ref{sec:formal_results} presents a formal statement of our results. These theoretical findings are substantiated through Monte Carlo simulations and real-world applications in Sections \ref{sec:mc} and \ref{sec:emp_exs}, respectively. Finally, Section \ref{sec:conclusion} concludes. 

	\section{Formal statement of the results} \label{sec:formal_results}
	
	Suppose a random variable $S$, $S\geq \underline{s}>0$, is Pareto distributed with a tail exponent $k>0$, and thus its survival function is given by \eqref{eq:sf_pareto}. Both $\underline{s}$ and $k$ are unknown. Let $S_1,S_2,...,S_n$ be a random sample of $S$ and $S_{(1)} \geq S_{(2)} \geq \dots \geq S_{(n)}$ be the associated (reversed) order statistics.\footnote{Following \cite{hill1975simple}, we reverse the usual definition of order statistics to simplify the formulas.}
		
	We analyze two estimators of $d \equiv 1/k$. First, the MLE derived in \cite{hill1975simple} as
	\begin{align}
		\hat{d}_{sML} = & \frac{\sum_{i=1}^{n-1} \ln(S_{(i)}/S_{(n)})}{n-1} . \label{eq:d_sMLE}
	\end{align}	
	We include a standard small-sample correction, dividing by $n-1$ instead of the original $n$, since this estimator is known to be unbiased and efficient, being both BLUE and MVUE (\citealt{aban_meerschaert2004generalized}). It is also consistent, with $\sqrt{n} \left(\hat{d}_{sML} - d\right)  \xrightarrow[]{d} \mathcal{N} \left(0,d^2\right)$ (\citealt{hall1982some}).
			
	Second, we apply OLS to the natural logarithm of \eqref{eq:sf_pareto}, replacing the lower bound $\underline{s}$ and the survival function $P(S\geq s)$ by the respective empirical counterparts $S_{(n)}$ and $\hat{P}(S\geq S_{(i)}) = i/n$.\footnote{The empirical probability $\hat{P}(S\geq S_{(i)})$ is the share of observations higher than or equal to $S_{(i)}$, which is exactly $i/n$ as $i$ is the observation's rank, that is, $S_{(i)}$ is the $i$-th largest observation in the sample.} Formally, the regression equation is
	\begin{align} 
		\ln \left(S_{(n)}/S_{(i)}\right) = & d\ln \hat{P}(S\geq S_{(i)}) + \epsilon_i = d\ln (i/n) + \epsilon_i \label{eq:reg_eq_d}
	\end{align}
	for $i=1,2,...,n-1$, where $\epsilon$ is an empirical error. As a result, the OLS estimator is 
	\begin{align} 
		\hat{d}_{OLS} = & \frac{\sum_{i=1}^{n-1} \ln (i/n) \ln(S_{(n)}/S_{(i)})}{\sum_{i=1}^{n-1} \left[\ln (i/n)\right]^2} . \label{eq:d_OLS}
	\end{align}	
	Note that we dropped the observation $i=n$ from the regression. However, this is done without loss of generality, as including such observation would not change the numerical result of \eqref{eq:d_OLS}. This estimator is known to be biased in small samples, but it is consistent, with $\sqrt{n} \left(\hat{d}_{OLS} - d\right)  \xrightarrow[]{d} \mathcal{N} \left(0,\frac{5}{4} d^2\right)$ (\citealt{schluter2018top}). This method consists in running a standard rank-size regression, but restricting the intercept value instead of freely estimating it.\footnote{Rewriting \eqref{eq:reg_eq_d}, $\ln S_{(n)} - \ln S_{(i)} = d\ln i - d\ln n + \epsilon_i \leftrightarrow \ln S_{(i)} = \left(\ln S_{(n)} + d\ln n\right) - d\ln i - \epsilon_i $. Thus, the intercept is constrained to $\ln S_{(n)} + d\ln n = \ln\left(S_{(n)}n^d\right)$.} Although such intercept estimation is common in the literature, it is not the most recommended practice. First, typically these regression lines are not valid distributions as $P(S\geq \underline{s}) \neq 1$ (\citealt{clauset_etal2009power}).\footnote{Similarly, \citet[p.254]{urzua2011testing} argues that ``the intercept is not a nuisance parameter in the regression.''} Second, this intercept constraint improves estimation efficiency, at least in large samples. Indeed, for the OLS estimator of $d$ \textit{with} intercept estimation, $\hat{d}_{OLS_c}$, it is well known that $\sqrt{n} \left(\hat{d}_{OLS_c} - d\right)  \xrightarrow[]{d} \mathcal{N} \left(0,2 d^2\right)$ (\citealt{kratz_resnick1996qq,csorgo_viharos1997asymptotic}), meaning its asymptotic variance is $2/(5/4)=1.6$ times higher than that of $\hat{d}_{OLS}$. Third, in small samples, $\hat{d}_{OLS_c}$ is more biased than $\hat{d}_{OLS}$, as shown in Appendix \ref{sec:app_formal_results_d_OLSc}.
	
	As \cite{aban_meerschaert2004generalized} point out, since $S$ is Pareto distributed with exponent $k$, $\ln (S/\underline{s})$ has an exponential distribution with rate $k = 1/d$.\footnote{To see that, just note $P\left[\ln (S/\underline{s}) \geq \ln (s/\underline{s})\right] = P(S \geq s) = (\underline{s}/s)^{k} = \exp[-k\ln(s/\underline{s})]$.} Consequently, from well known properties of the exponential distribution, (i) $\E\left[ \ln \left(S_i / \underline{s}\right) \right] = d$ and (ii) $\E\left[ \ln \left(S_{(i)} / \underline{s}\right) \right] = \sum_{j=i}^n (d/j)$. Using these moments, we can easily obtain our two main results.\footnote{From these moments, it is also straightforward to show that $\hat{d}_{ML} = \frac{n-1}{n} \hat{d}_{sML}$ is biased, but $\hat{d}_{sML}$ it is not. After all, given Equation \eqref{eq:d_sMLE}, $\E\left(\hat{d}_{sML}\right) = \frac{\sum_{i=1}^{n} \E\left[ \ln \left(S_i / \underline{s}\right)-\ln \left(S_{(n)} / \underline{s}\right) \right]}{n-1} = \frac{\sum_{i=1}^{n} (d-d/n)}{n-1} = \frac{n-1}{n-1} d = d$.} 
	
	The first proposition derives a simple way to correct the bias of the OLS estimator:
	
	\begin{proposition} \label{prop:d_OLS_bias}
		The shifted OLS estimator $\hat{d}_{sOLS} \equiv g(n) \hat{d}_{OLS}$ is unbiased for
		\begin{align}  
			g(n) \equiv \frac{\sum_{i=1}^{n-1} \left[\ln (i/n)\right]^2}{ n^{-1} \sum_{i=1}^{n-1} \ln (i/n) -\sum_{j=1}^{n} j^{-1} \sum_{i=1}^{j}  \ln (i/n)} . \notag
		\end{align}	
	\end{proposition}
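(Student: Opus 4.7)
The plan is to compute $\E[\hat{d}_{OLS}]$ directly using the moment formula $\E[\ln(S_{(i)}/\underline{s})] = \sum_{j=i}^n d/j$ stated just before the proposition, and then verify that this expectation equals $d/g(n)$, so that multiplying by $g(n)$ yields an unbiased estimator. Since the denominator in $\hat{d}_{OLS}$ is deterministic, only the numerator needs to be treated in expectation.

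First I would compute
\begin{align*}
\E\bigl[\ln(S_{(n)}/S_{(i)})\bigr] = \E\bigl[\ln(S_{(n)}/\underline{s})\bigr] - \E\bigl[\ln(S_{(i)}/\underline{s})\bigr] = \frac{d}{n} - \sum_{j=i}^{n} \frac{d}{j} = -\sum_{j=i}^{n-1} \frac{d}{j},
\end{align*}
and substitute into \eqref{eq:d_OLS} to get
\begin{align*}
\E(\hat{d}_{OLS}) = -d \cdot \frac{\sum_{i=1}^{n-1} \ln(i/n) \sum_{j=i}^{n-1} 1/j}{\sum_{i=1}^{n-1} \left[\ln(i/n)\right]^2}.
\end{align*}
Unbiasedness of $\hat{d}_{sOLS} = g(n)\hat{d}_{OLS}$ is therefore equivalent to the identity
\begin{align*}
n^{-1} \sum_{i=1}^{n-1} \ln(i/n) - \sum_{j=1}^{n} j^{-1} \sum_{i=1}^{j} \ln(i/n) = -\sum_{i=1}^{n-1} \ln(i/n) \sum_{j=i}^{n-1} \frac{1}{j}.
\end{align*}

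The second step is to verify this identity. I would swap the order of summation on the right-hand side, rewriting $\sum_{i=1}^{n-1} \ln(i/n) \sum_{j=i}^{n-1} 1/j = \sum_{j=1}^{n-1} j^{-1} \sum_{i=1}^{j} \ln(i/n)$. Comparing both sides, it then suffices to show that the ``extra'' terms on the left-hand side, namely $n^{-1}\sum_{i=1}^{n-1} \ln(i/n) - n^{-1}\sum_{i=1}^{n} \ln(i/n)$, vanish. They do, because $\ln(n/n)=0$, so the two sums are equal and the difference is zero.

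The calculation is essentially routine once the moment formula is in hand; the only mild subtlety is the interchange of the order of summation, together with noticing that the term $j=n$ in $g(n)$'s denominator is precisely what matches $n^{-1}\sum_{i=1}^{n-1} \ln(i/n)$ after exploiting $\ln(n/n)=0$. I do not anticipate any real obstacle beyond this bookkeeping: no asymptotic arguments, no distributional facts beyond the exponential-order-statistic moment already cited, and no manipulation of the $\sum [\ln(i/n)]^2$ denominator (which cancels between $\E(\hat{d}_{OLS})$ and $g(n)$).
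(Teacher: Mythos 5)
Your proposal is correct and follows essentially the same route as the paper: compute $\E(\hat{d}_{OLS})$ from the exponential order-statistic moment, interchange the order of summation, and use $\ln(n/n)=0$ to match the $j=n$ term in $g(n)$'s denominator. The paper's own proof is just a more compressed version of the same calculation, so no substantive difference.
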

	\begin{proof}
		From Equation \eqref{eq:d_OLS} and given that $\E\left[ \ln \left(S_{(i)} / \underline{s}\right) \right] = \sum_{j=i}^n (d/j)$,
		\begin{align} 
			\E\left(\hat{d}_{OLS}\right) = & \frac{\sum_{i=1}^{n-1} \ln (i/n) \E \left[\ln(S_{(n)}/\underline{s})-\ln(S_{(i)}/\underline{s})\right]}{\sum_{i=1}^{n-1} \left[\ln (i/n)\right]^2} = \frac{\sum_{i=1}^{n-1} \ln (i/n) \left[(d/n) - \sum_{j=i}^n (d/j) \right]}{\sum_{i=1}^{n-1} \left[\ln (i/n)\right]^2} \notag \\
			\E\left(\hat{d}_{OLS}\right) = & d \left\{\frac{n^{-1}\sum_{i=1}^{n-1} \ln (i/n)-\sum_{j=1}^{n} j^{-1} \sum_{i=1}^{j} \ln (i/n)}{\sum_{i=1}^{n-1} \left[\ln (i/n)\right]^2} \right\} = \frac{d}{g(n)} . \notag
		\end{align}	
		As a consequence, $\E \left( \hat{d}_{sOLS}\right) = g(n) \E \left( \hat{d}_{OLS}\right) = d$. 
	\end{proof}	
	\bigskip
	
	\cite{gabaix_ibragimov2011rank} suggest running OLS rank-size regressions but using the shifted rank $i-1/2$ instead of the rank $i$. They show that $1/2$ is the shift that optimally reduces the bias (up to leading order terms) in regressions \textit{with} intercept estimation. \cite{schluter2018top} finds similar results also (i) without such intercept estimation and (ii) when the distribution is regularly varying rather than strictly Pareto. While this procedure mitigates the problem, it does not solve it. As Proposition \ref{prop:d_OLS_bias} shows, to completely vanish the bias, one should multiply the OLS estimator \eqref{eq:d_OLS} by $g(n)$. We also derive a shift that corrects the bias for the OLS estimator of $d$ \textit{with} intercept estimation. This is discussed in Appendix \ref{sec:app_formal_results_d_OLSc}, Proposition \ref{prop:d_OLSc_bias}.
			
	In large samples, $g(n) = 1$, and consequently, the shifted OLS estimator retains the same asymptotic properties as the plain OLS method.\footnote{The bias of the plain OLS estimator $\hat{d}_{OLS}$ vanishes as the sample size increases. Consequently, $g(n) \to 1$ as $n\to \infty$, since otherwise, this would imply that $\hat{d}_{sOLS}$ is asymptotically biased, contradicting Proposition \ref{prop:d_OLS_bias}.} That is, $\sqrt{n} \left(\hat{d}_{sOLS} - d\right)  \xrightarrow[]{d} \mathcal{N} \left(0,\frac{5}{4} d^2\right)$. However, in small samples, this function is strictly below one (Figure \ref{fig:fn_gn}), correcting the positive bias of the plain OLS estimator. This makes the shifted OLS a viable alternative for the estimation of tail exponents. However, as the shifted MLE is both BLUE and MVUE, you should still rely on this estimator if the distribution is strictly Pareto over the entire support. We advocate using the OLS method because the distribution may be just Pareto-like rather than strictly Pareto. Our reasoning relies on our second proposition, which demonstrates a close numerical relationship between the unbiased estimators based on ML and OLS principles ($\hat{d}_{sOLS}$ and $\hat{d}_{sML}$, respectively):
			 
	\begin{proposition} \label{prop:d_ML_WLS}
		Denote by $\hat{d}_{WLS}$ the WLS estimator of $d$ on \eqref{eq:reg_eq_d} with weights $w_i = -1/\ln \hat{P}(S\geq S_{(i)}) = 1/\ln(n/i)$ for $i=1,...,n-1$. Then, $\hat{d}_{sML} = f(n) \hat{d}_{WLS}$ for 
		\begin{align}  
			f(n) \equiv - \frac{\sum_{i=1}^{n-1} \ln (i/n)}{n-1} . \notag
		\end{align}
	\end{proposition}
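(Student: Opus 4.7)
The plan is a direct computation: write down the WLS estimator for the no-intercept regression \eqref{eq:reg_eq_d} with the specified weights, then observe that the chosen weights are designed precisely so that the factors $\ln(i/n)$ cancel, reducing the WLS expression to the Hill sum up to a scalar factor.

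Concretely, I would first recall that the WLS estimator of $d$ on the no-intercept regression \eqref{eq:reg_eq_d} with weights $w_i$ is
\begin{align}
\hat{d}_{WLS} = \frac{\sum_{i=1}^{n-1} w_i \ln(i/n) \ln(S_{(n)}/S_{(i)})}{\sum_{i=1}^{n-1} w_i \left[\ln(i/n)\right]^2}. \notag
\end{align}
Next I would substitute $w_i = -1/\ln(i/n)$. In the numerator, one factor of $\ln(i/n)$ cancels against the weight, leaving $-\sum_{i=1}^{n-1} \ln(S_{(n)}/S_{(i)}) = \sum_{i=1}^{n-1} \ln(S_{(i)}/S_{(n)})$. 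In the denominator, one factor of $\ln(i/n)$ is likewise cancelled, yielding $-\sum_{i=1}^{n-1} \ln(i/n)$. Both sums are positive, since $\ln(i/n) < 0$ for $i < n$, so everything is well defined.

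Finally, I would compare with $\hat{d}_{sML}$ as given in \eqref{eq:d_sMLE}: its numerator is exactly $\sum_{i=1}^{n-1} \ln(S_{(i)}/S_{(n)})$, and its denominator is $n-1$. Hence the ratio $\hat{d}_{sML}/\hat{d}_{WLS}$ equals $-\sum_{i=1}^{n-1}\ln(i/n)/(n-1) = f(n)$, which gives the claim $\hat{d}_{sML} = f(n)\,\hat{d}_{WLS}$.

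There is no real obstacle here; the argument is a one-line algebraic manipulation once the weights are plugged in. The only thing worth emphasizing for the reader is the motivation behind the weight choice $w_i = -1/\ln(i/n)$, namely that it is precisely the weighting that collapses the WLS normal equation into the Hill MLE formula, making the identification with $\hat{d}_{sML}$ transparent up to the scalar $f(n)$.
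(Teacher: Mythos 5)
Your proposal is correct and follows essentially the same route as the paper's own proof: plug the weights $w_i = -1/\ln(i/n)$ into the no-intercept WLS formula, cancel one factor of $\ln(i/n)$ in numerator and denominator, and read off the ratio against \eqref{eq:d_sMLE}. Nothing is missing.
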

	\begin{proof}
		Applying WLS with weights $w_i = 1/\ln(n/i)$ for $i=1,...,n-1$ to \eqref{eq:reg_eq_d},
		\begin{align} 
			\hat{d}_{WLS} = & \frac{\sum_{i=1}^{n-1} w_i \ln (i/n)  \ln(S_{(n)}/S_{(i)})}{\sum_{i=1}^{n-1} w_i \left[\ln (i/n)\right]^2}  = \frac{(n-1)^{-1}\sum_{i=1}^{n-1} \ln(S_{(i)}/S_{(n)})}{-(n-1)^{-1}\sum_{i=1}^{n-1} \ln (i/n)} \notag \\
			\therefore \hat{d}_{sML} = & f(n) \hat{d}_{WLS} , \notag
		\end{align}
		where in the last line we use \eqref{eq:d_sMLE}.
	\end{proof}	
	\bigskip
	
	Proposition \ref{prop:d_ML_WLS} is equivalent to the WLS interpretation of the MLE provided in \cite{beirlant_etal1996tail}, demonstrating that, in large samples, the MLE functions as a WLS estimator that assigns greater weight to observations in the lower tail. Similar result applies for the dual problem of estimating $k=1/d$ (Proposition \ref{prop:k_ML_WLS} in Appendix \ref{sec:app_formal_results_k}). For OLS regressions with \cite{gabaix_ibragimov2011rank} correction, analogous results can also be obtained for the estimators of both $d$ and $k$ if we maintain the constrained intercept estimation.\footnote{For the estimator of $d$, just use $w_i = 1/\ln\left[n/(i-1/2)\right]$ for $i=1,...,n$ instead of the weights employed in Proposition \ref{prop:d_ML_WLS}. Regarding the estimator of $k$, one can still consider the same weights of Proposition \ref{prop:k_ML_WLS}.}
			
	In summary, from Propositions \ref{prop:d_OLS_bias} and \ref{prop:d_ML_WLS}, $\hat{d}_{sOLS} = g(n) \hat{d}_{OLS}$ and $\hat{d}_{sML} = f(n) \hat{d}_{WLS}$, meaning there are two sources of numerical differences between these two unbiased estimators. First, they employ different small-sample corrections: $g(n)$ for OLS and $f(n)$ for ML. However, these adjustments would be quite similar in practice, as illustrated in Figure \ref{fig:fn_gn}. Second, the ML method assigns different weights to each observation, giving more importance to smaller observations. This fact is shown in Figure \ref{fig:d_weights}, in which we normalize the weights to add up to one. We consider $n=3,10,100,10000$. As can be seen, the ML estimator assigns significantly greater weight to observations in the lower tail, particularly for larger samples. Indeed, for $n=10000$, the 20\% smallest observations have approximately 90\% of the weight!

	\begin{figure}[p]
		\centering
		\includegraphics[scale=1]{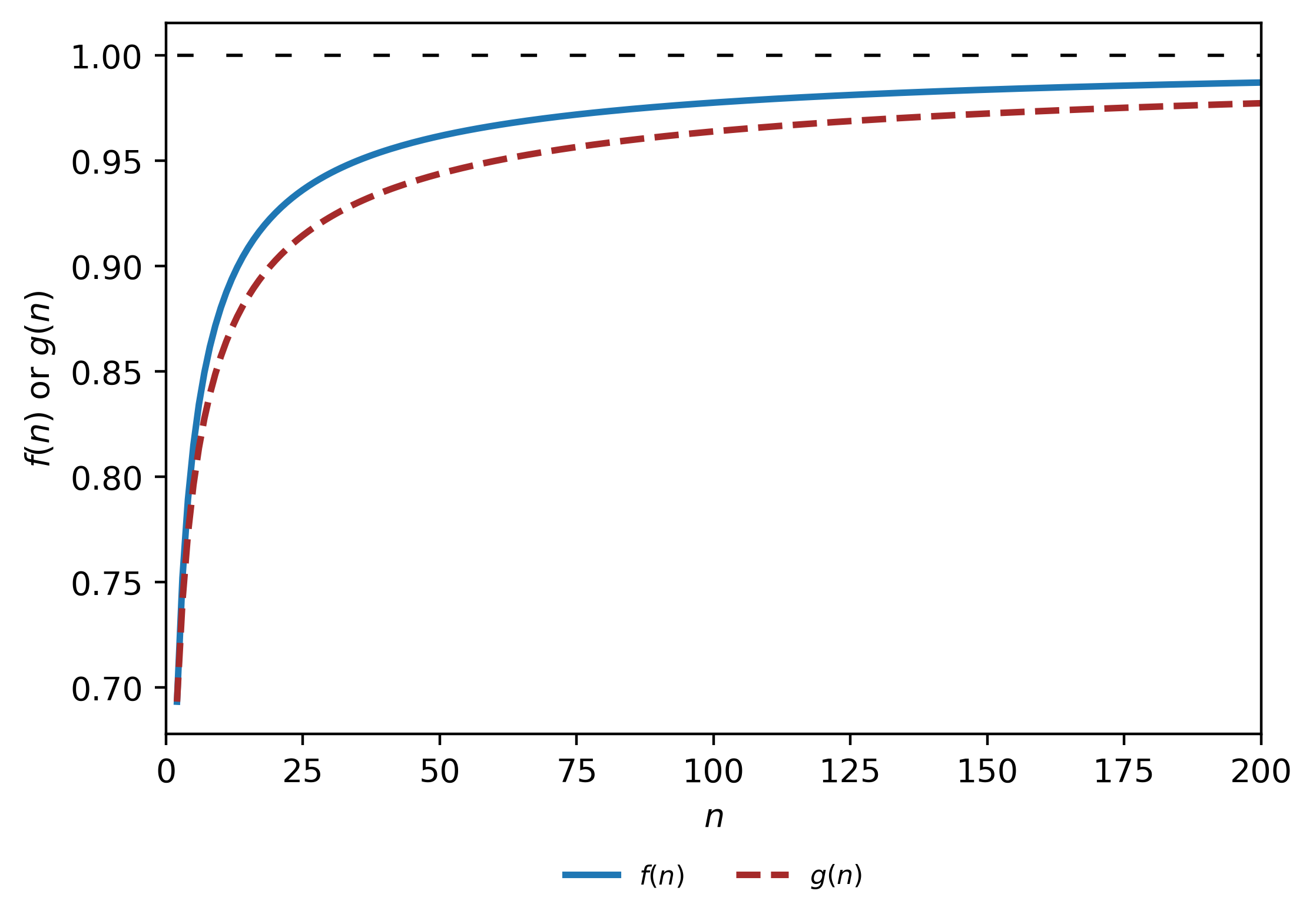}
		\caption{ML bias correction $f(n)$ \textit{versus} OLS bias correction $g(n)$}
		\label{fig:fn_gn}
	\end{figure}
	
	\begin{figure}[p]
		\centering
		\includegraphics[scale=1]{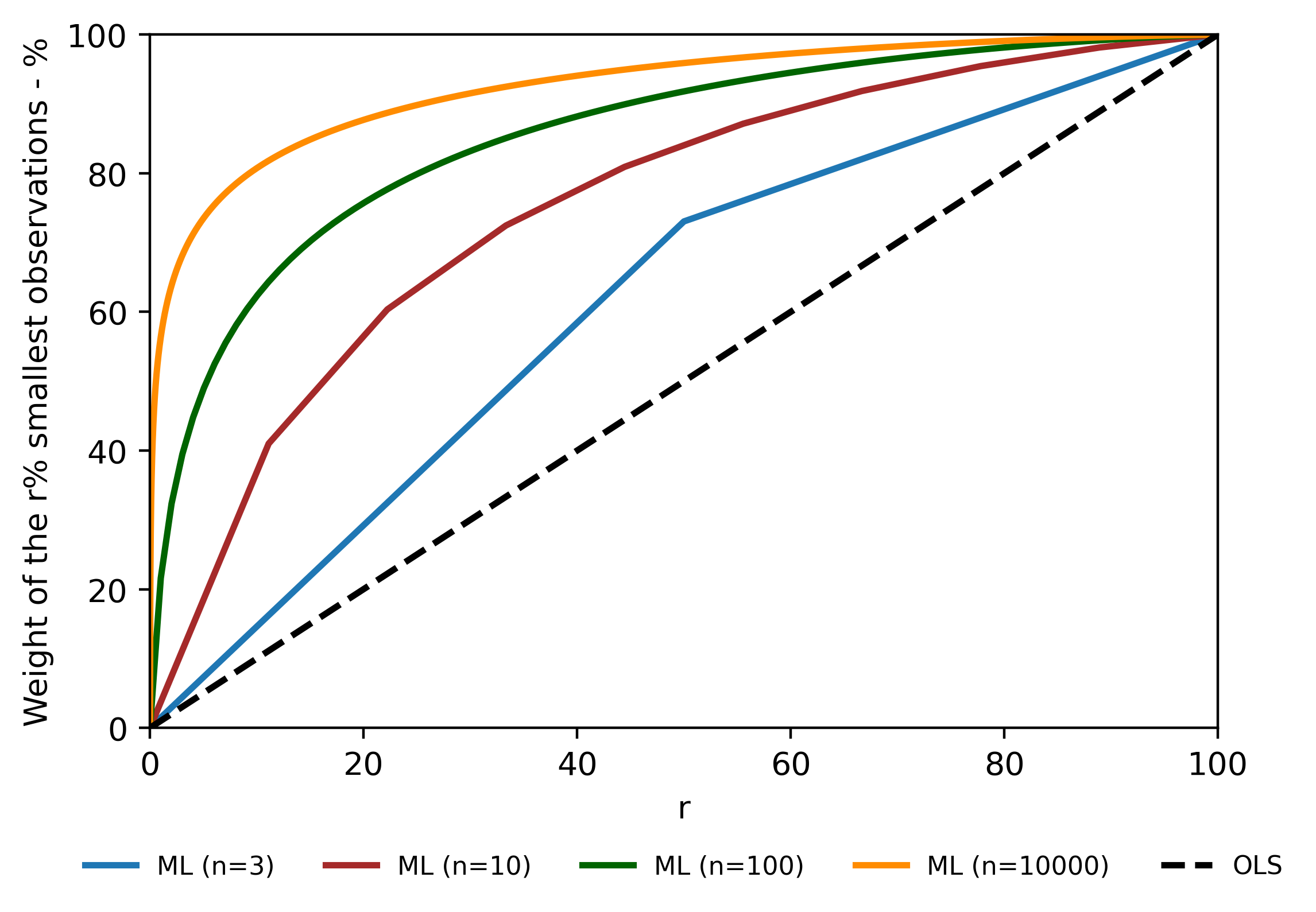}
		\caption{Observations' weights for $d$ estimators: ML \textit{versus} OLS}
		\label{fig:d_weights}
	\end{figure}
	
	Therefore, the most relevant source of numerical difference between $\hat{d}_{sOLS}$ and $\hat{d}_{sML}$ is the observations' weights. But why should we care about this numerical property? Does it mean MLE weights smaller observations excessively? If the distribution is strictly Pareto, when both (shifted) estimators are unbiased, the answer is no, since the MLE weighting is, in some sense, optimal, as the MLE is both BLUE and MVUE. However, if the distribution is just Pareto-like rather than strictly Pareto, these MLE weights may not be ideal. Indeed, the OLS estimator may perform better under two commonly studied deviations from a strictly Pareto distribution: (i) strictly Pareto but only in the upper tail, and (ii) regularly varying. In these cases, a Pareto distribution holds (approximately, in the second case) only in the upper tail, above some unknown threshold. There are several methods in the literature to estimate this threshold (e.g., \citealt{breiman_etal1990robust}, \citealt{dekkers_dehaan1993optimal}, \citealt{drees_kaufmann1998selecting}, \citealt{danielsson_etal2001using}, \citealt{handcock_jones2004likelihood}, and \citealt{clauset_etal2007frequency}), but there is always some uncertainty, and you may end up with some smaller observations that are not Pareto in your sample. In such scenario, one should expect a higher bias for the MLE since it will assign greater weight to these smallest, non-Pareto, observations. In Section \ref{sec:mc}, we confirm this through Monte Carlo experiments.
			
	Moreover, in such cases where the distribution is (roughly) Pareto only in the upper tail, it may be recommended to estimate $d$ for different supports, given the uncertainty regarding the appropriate sample. In this scenario, you may want an estimator that is more robust to these support choices, which should be the OLS estimator. After all, when we narrow the support, dropping the smallest observations from the sample, the ML estimate is expected to change more significantly, given the greater weights it assigns to these smaller observations. Investigating the daily trading volume of Amazon, Inc. stock from March 20, 1995, to June 18, 2001, \cite{aban_meerschaert2004generalized} find empirical results that are consistent with this reasoning. The increased robustness of the OLS method is also found by \cite{kratz_resnick1996qq} in both empirical and Monte Carlo exercises.\footnote{Their Monte Carlo experiment uses a Pareto with $k=1$, while they empirically assess ``[...] interarrival times between packets generated and sent to a host by a terminal during	a logged-on session'' (\citealt[p.720]{kratz_resnick1996qq}).} We also observe this result in our own empirical exercises, discussed in Section \ref{sec:emp_exs}.
			
	In practice, it may be virtually impossible to distinguish a strictly Pareto distribution from a Pareto-like distribution. And if your distribution is just Pareto-like, you should address the issue of choosing the ideal estimation support. Given the uncertainty involved, it is beneficial to employ an estimator that (i) performs better when non-Pareto observations are (incorrectly) included in the sample and (ii) is more robust to the choice of estimation sample. As we just argued, the OLS method surpasses the MLE in both respects. This is why we advocate for also using the OLS estimator in real-world applications. 
	
	\section{Monte Carlo exercises} \label{sec:mc}
		
	In the previous section, we argued, based on theoretical results, that the OLS estimator may outperform the MLE if the distribution is (i) strictly Pareto but only in the upper tail, or (ii) regularly varying. In this section, we investigate this through Monte Carlo simulations, considering these two Pareto-like distributions. We evaluate four estimators of $d$: (i) the shifted MLE $\hat{d}_{sML}$, (ii) the OLS estimator $\hat{d}_{OLS}$, (iii) the shifted OLS estimator $\hat{d}_{sOLS} = g(n) \hat{d}_{OLS}$, and (iv) the OLS estimator with \cite{gabaix_ibragimov2011rank} correction $\hat{d}_{OLS-GI}$, in which we run the OLS regression \eqref{eq:reg_eq_d} again, but replacing $i$ with $i-1/2$. We assess their performance through estimators' bias, variance, and Mean Squared Error (MSE), each computed over 10,000 simulations. 
					
	\subsection{Strictly Pareto but only in the upper tail} \label{sec:mc1}
	
	As in \cite{clauset_etal2009power}, we suppose the distribution of $S$, $S \geq 0$, follows a Pareto at $\underline{s}$ and above but an exponential below, using the density function
	\begin{align} \label{eq:density_mc1}	
		f(s) = &
		\begin{cases}
			C e^{(k+1)(1-s/\underline{s})} & \text{, if } 0 \leq s < \underline{s} \\
			C (\underline{s}/s)^{k+1} & \text{, if } s \geq \underline{s}
		\end{cases}
	\end{align}	
	where $k,\underline{s}>0$, and $C>0$ is a normalizing constant that ensures the probabilities add up to one over the support $[0,+\infty)$.\footnote{From \eqref{eq:density_mc1}, $P(S < \underline{s}) =  C \int_0^{\underline{s}} e^{(k+1)(1-s/\underline{s})} ds = \frac{C \underline{s}}{k+1} \left( e^{k+1} - 1  \right)$ and $P(S \geq \underline{s}) =  C \underline{s}^{k+1} \int_{\underline{s}}^{\infty} s^{-(k+1)} ds =  \frac{C \underline{s}}{k}$. Therefore, $1 = P(S < \underline{s}) + P(S \geq \underline{s}) = \frac{C \underline{s}}{k+1} \left( e^{k+1} - 1  \right) + \frac{C \underline{s}}{k} \to C = \frac{k(k+1)}{\underline{s}(1+ke^{k+1})}$.} We set $k=1$ and $\underline{s}=50$, yielding the survival function shown in Figure \ref{fig:sf_mc1}, which has the expected power law shape for $s\geq50$. To evaluate the estimators' performance also when the Pareto support is not properly identified, we sample over the non-Pareto support $S\geq5$ in addition to the ideal $S\geq 50$. For each support, we draw different numbers of observations between 3 and 100 from this distribution. 
	\begin{figure}[h!]
		\centering
		\includegraphics[scale=1]{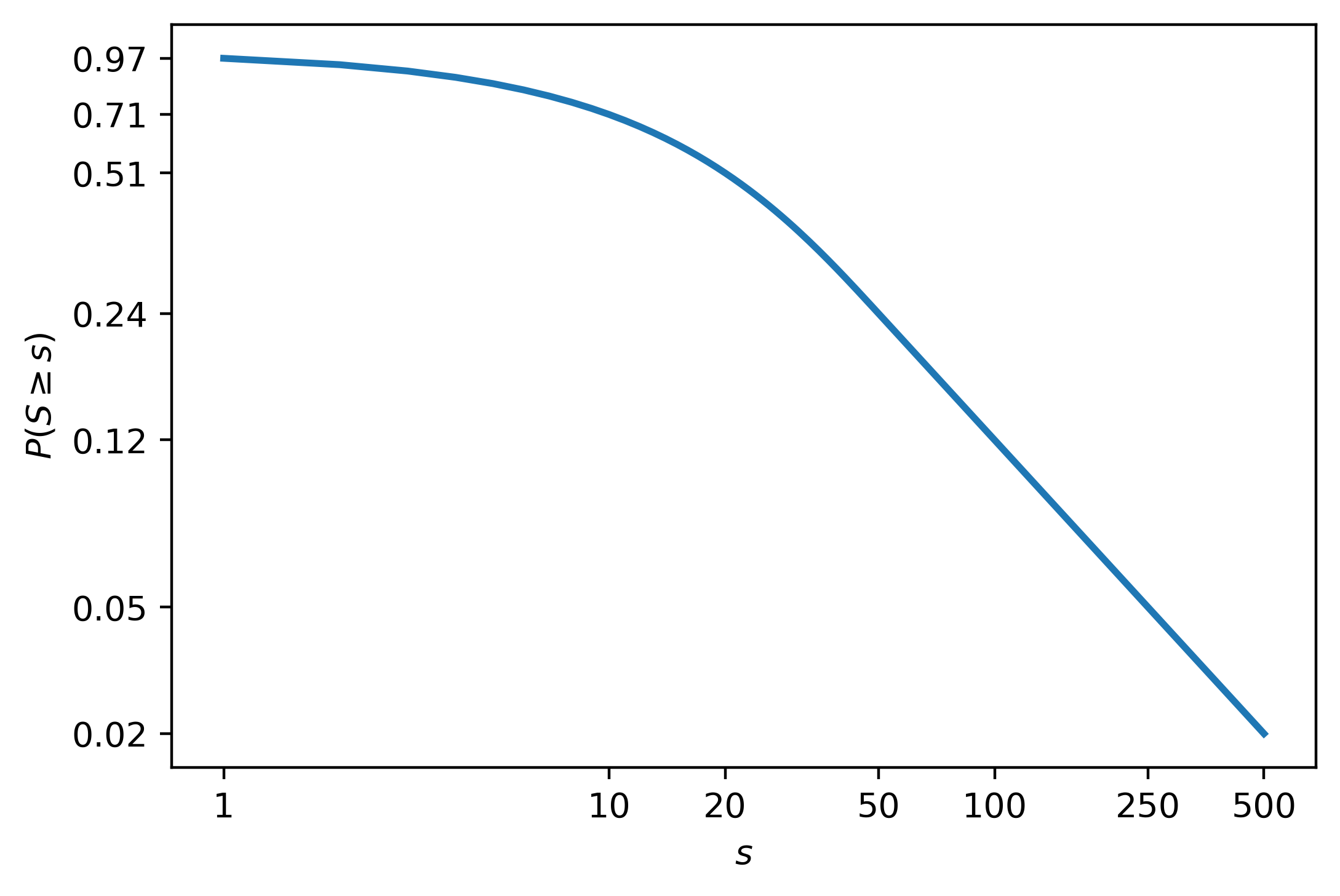}
		\caption{Survival function of the first Monte Carlo exercise (axes in logarithmic scale)}
		\label{fig:sf_mc1}
	\end{figure}
			
	Figure \ref{fig:mc1_d} plots the performance measures for the estimators of $d$, for each estimation support and number of observations. Under the ideal support $S\geq50$, all four estimators are consistent. The OLS estimator is biased and has the highest variance in small samples. The \cite{gabaix_ibragimov2011rank} method is also biased, but only in very small samples (say, less than 20 observations). The shifted ML and shifted OLS estimators are not biased, as expected given the results of Section \ref{sec:formal_results}. In terms of MSE, the shifted ML, shifted OLS, and \cite{gabaix_ibragimov2011rank} estimators show similar performance. 
		
	When we incorrectly use $S\geq5$, all estimators are biased, even in large samples. Consistent with our reasoning in Section \ref{sec:formal_results}, the shifted MLE is more biased and consequently has a higher MSE than the shifted OLS estimator, reflecting the greater weights the MLE assigns to the smallest, non-Pareto, observations. Interestingly, the relative performance of the shifted MLE becomes even worse as the sample gets larger, which is not surprising since the weights of the smallest observations increase with the sample size (Figure \ref{fig:d_weights}). Among all four evaluated estimators, the shifted MLE also has the worst performance in terms of both bias and MSE, except perhaps in very small samples. This result suggests that practitioners should also consider using OLS estimators, as there is always a risk of using an incorrect support in the estimation.
		
	Before moving on to the regularly varying case, it is worth noting that while we have only discussed the results for $k=1$, our main conclusions hold more generally. In Appendix \ref{sec:app_mc1}, we present the performance of the estimators for $k = 1/2,1,2,3$, all of which show qualitatively similar results.	
	
	\begin{figure}[h!]
		\centering
		\includegraphics[scale=1]{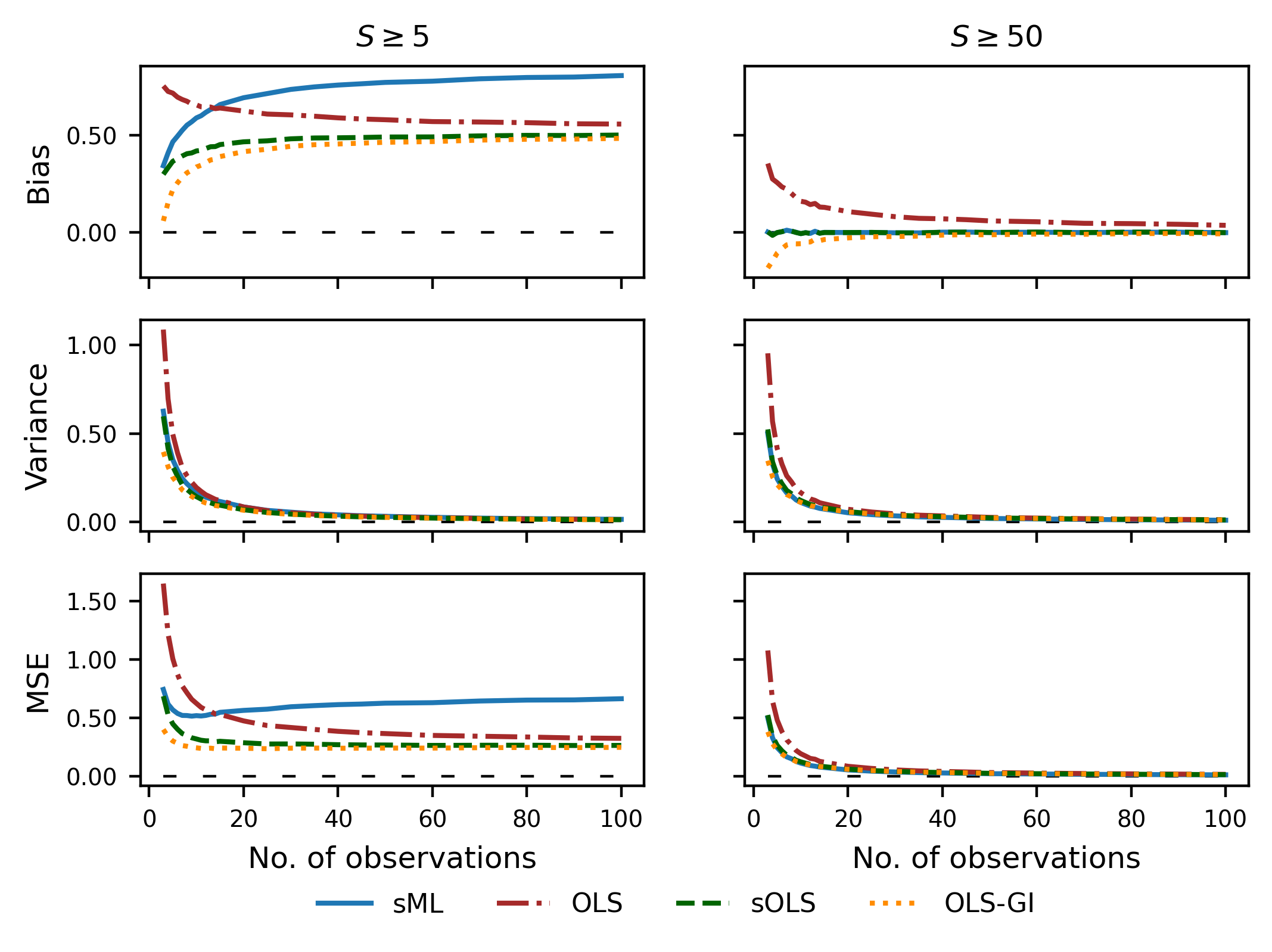}
		\caption{Performance of $d$ estimators in the first Monte Carlo exercise}
		\label{fig:mc1_d}
	\end{figure}

	\subsection{Regularly varying} \label{sec:mc2}
			
	Now, we study the performance of the estimators when the distribution is regularly varying, meaning $P(S \geq s) = s^{-k} L(s)$, where $L(s)$ is a slowly varying function, i.e., $L(ts)/L(s) \to 1$ as $s \to \infty$ for any $t>0$. More specifically, we assume $S$, $S \geq 0$, is Burr distributed, implying
	\begin{align} \label{eq:sf_mc2}	
		P(S \geq s) = & (1+s^{-k\rho})^{1/\rho} 
	\end{align}	
	for $s \geq 0$, $k>0$, and $\rho<0$. It is easy to see that this distribution varies regularly as $L(s) = s^k (1+s^{-k\rho})^{1/\rho} = (s^{k\rho} +1)^{1/\rho}$ is slowly varying.\footnote{For any $t>0$, $\lim_{s\to \infty} \frac{L(ts)}{L(s)} = \left(\frac{(st)^{k\rho} +1}{s^{k\rho} +1}\right)^{1/\rho}=1$.} Thus, this distribution approximates a Pareto with exponent $k$ in the upper tail as $\frac{P(S \geq s)}{P(S \geq t s)} \to t^k$ as $s\to \infty$ for any $t>0$, which is exactly the decay of the Pareto survival function \eqref{eq:sf_pareto}.\footnote{From \eqref{eq:sf_mc2}, $\lim_{s\to\infty} \frac{P(S \geq s)}{P(S \geq t s)} = \left[ \lim_{s\to\infty} \frac{1+s^{-k\rho}}{1+(t s)^{-k\rho}}\right]^{1/\rho} = \left( t ^{k\rho}\right)^{1/\rho} = t ^k$, while \eqref{eq:sf_pareto} yields $\frac{P(S \geq s)}{P(S \geq t s)} = t^k$.} Moreover, this approximation improves progressively as $\rho$ increases in magnitude. In particular, if $\rho \to -\infty$, the Burr density coincides with a Pareto distribution if defined over any strictly positive support.\footnote{If $S$, $S \geq \underline{s}>0$, is truncated Burr distributed, $P(S \geq s) = \frac{(1+s^{-k\rho})^{1/\rho}}{(1+\underline{s}^{-k\rho})^{1/\rho}}$. Note $\lim_{\rho \to -\infty} (1+s^{-k\rho})^{1/\rho} = \exp \left[ \lim_{\rho \to -\infty} \frac{\ln (1+s^{-k\rho})}{\rho} \right] = \exp \left[ \lim_{\rho \to -\infty} \frac{-k s^{-k\rho} \ln s }{1+s^{-k\rho}} \right] = \exp \left[ \ln \left(s^{-k}\right) \right] = s^{-k}$, with the second equality derived from the L'Hôpital's rule. As a result, $\lim_{\rho \to -\infty} P(S \geq s) = \frac{\lim_{\rho \to -\infty} (1+s^{-k\rho})^{1/\rho}}{\lim_{\rho \to -\infty} (1+\underline{s}^{-k\rho})^{1/\rho}} = (\underline{s}/s)^k$.} Given that, we assess the estimators' performance considering different values of $|\rho|$ between 1/2 and 5. In all cases, we set $k=1$. Figure \ref{fig:sf_mc2} plots the respective survival functions for $\rho=-5,-1,-1/2$. Consistent with the theoretical results, they show a Pareto shape in the upper tail, particularly for high $|\rho|$.
		
	\begin{figure}[h!]
		\centering
		\includegraphics[scale=1]{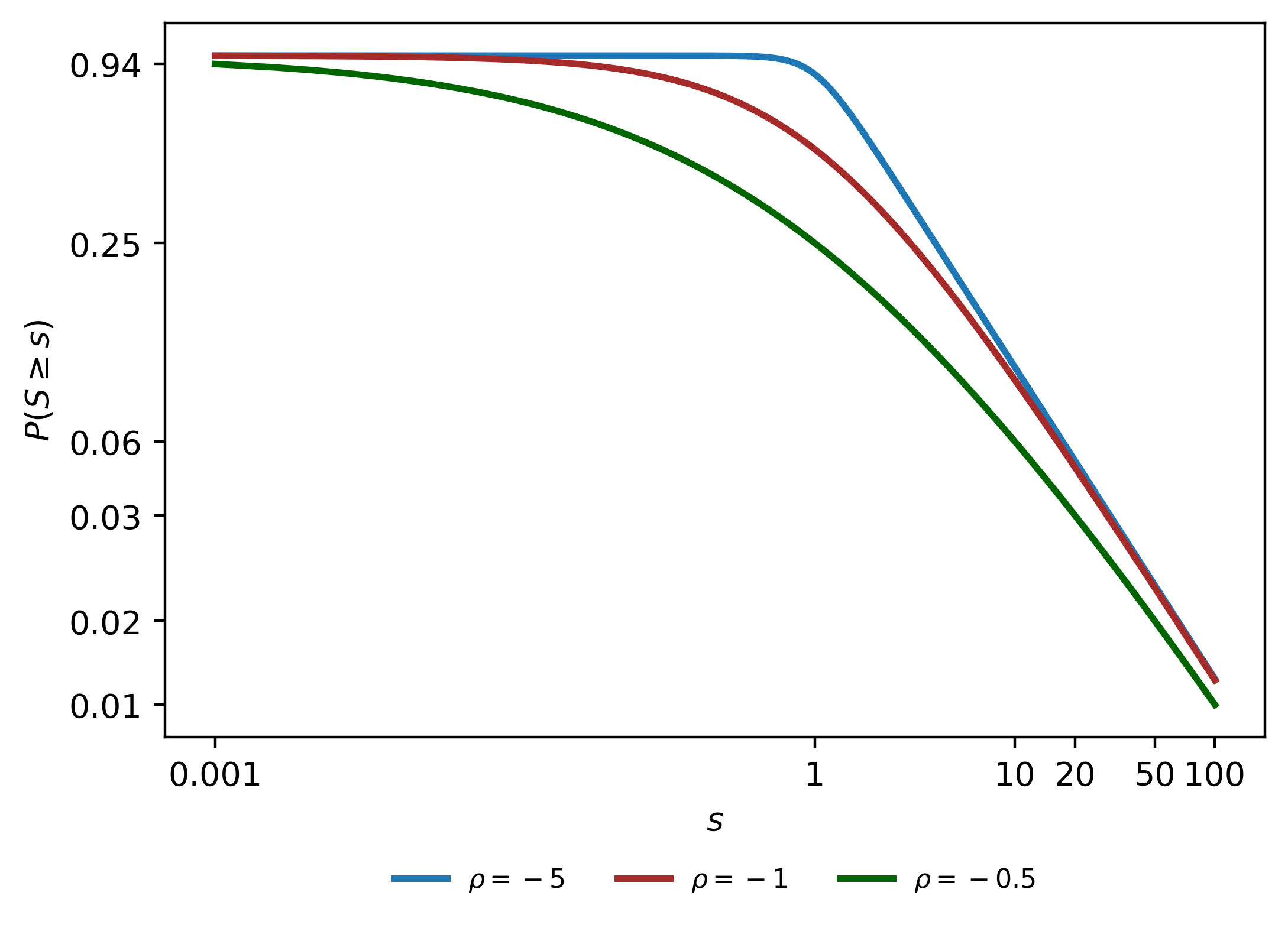}
		\caption{Survival function of the second Monte Carlo exercise (axes in logarithmic scale)}
		\label{fig:sf_mc2}
	\end{figure}

	Figure \ref{fig:mc2_d} plots the performance measures of the estimators for each $\rho$, considering either 10 or 100 observations, sampled over the entire non-negative support. As expected, the estimators are unbiased if the magnitude of $\rho$ is high, but they become progressively biased as $|\rho|$ decreases. Consistent with the results of Section \ref{sec:formal_results}, this bias is more significant for the shifted MLE, which also shows more variance and thus higher MSE. As in the exercise of Section \ref{sec:mc1}, the relative performance of the shifted MLE deteriorates further in larger samples, consistent with the results of Figure \ref{fig:d_weights}. Therefore, these results reaffirm the superior performance of OLS methods under deviations from a strictly Pareto distribution. In particular, for small samples, one should consider using the shifted OLS or \cite{gabaix_ibragimov2011rank} estimator, as they show less bias, variance, and MSE than the plain OLS method under such circumstances.
		
	As in the first Monte Carlo exercise, we have only discussed the results for $k=1$, but our main conclusions hold more generally. This can be seen in Appendix \ref{sec:app_mc2}, which presents the estimators' performance for $k = 1/2,1,2,3$.

	\begin{figure}[h!]
		\centering
		\includegraphics[scale=1]{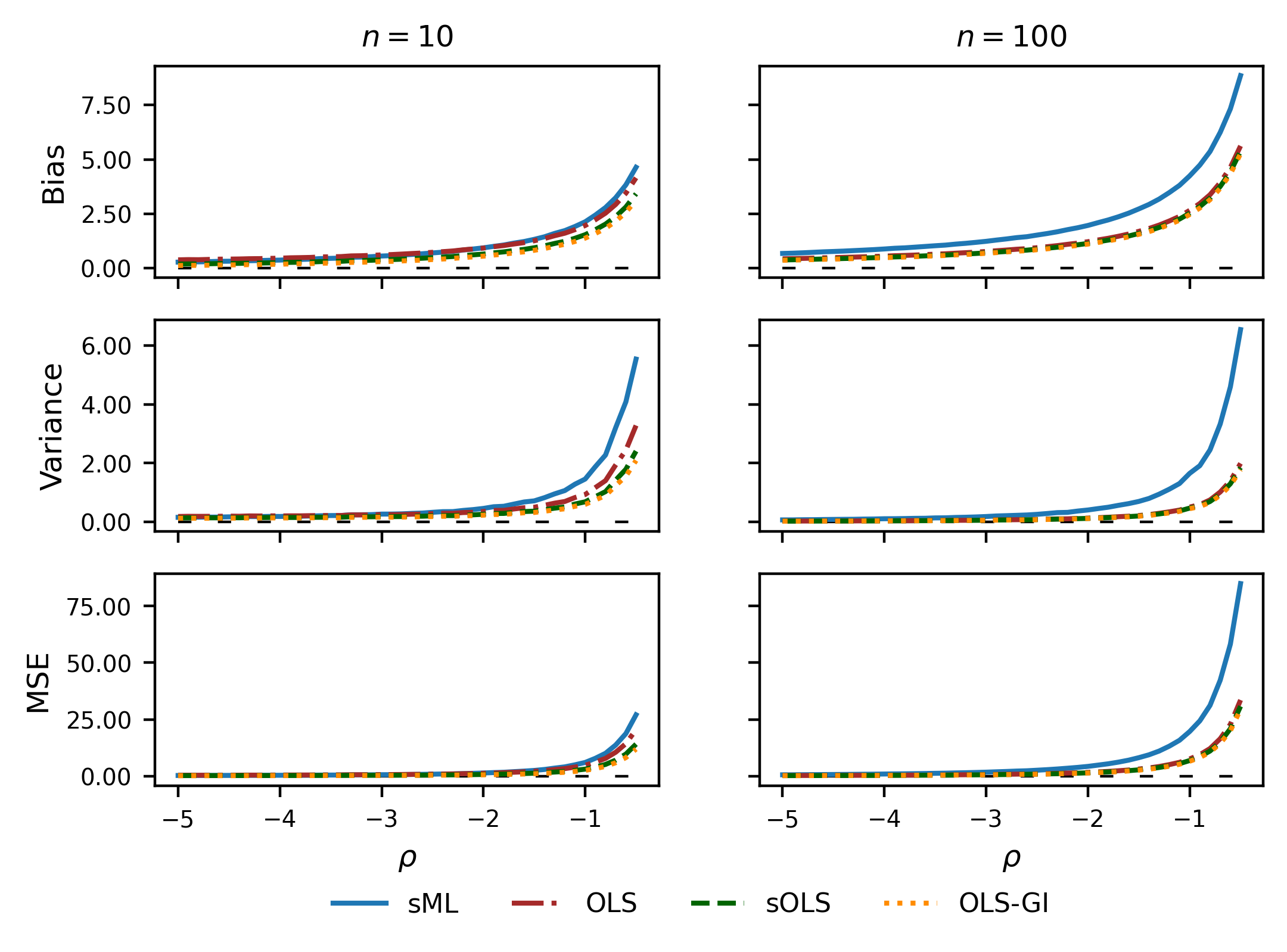}
		\caption{Performance of $d$ estimators in the second Monte Carlo exercise}
		\label{fig:mc2_d}
	\end{figure}

	\section{Empirical applications} \label{sec:emp_exs}
	
	In Section \ref{sec:formal_results}, we argued that OLS estimators should be more robust to the choice of the estimation support, which is a desirable feature as it may be recommended to estimate $d$ under different assumptions to deal with the uncertainty regarding the appropriate sample. This reasoning relied on theoretical findings, which we now substantiate with two real-world applications from \cite{gabaix2009power,gabaix2016power}. First, we study the distribution of US cities by population, using data for 366 Metropolitan Statistical Areas in 2010 from the Statistical Abstracts of the United States, published by the US Census Bureau. Second, we investigate the distribution of S\&P absolute daily returns between 04/28/2014 and 04/25/2024, totaling 2,515 observations that are taken from the Federal Reserve Economic Data (FRED). In both exercises, we estimate $d$ for all possible supports, employing the same four estimators of Section \ref{sec:mc}: (i) the shifted MLE $\hat{d}_{sML}$, (ii) the OLS estimator $\hat{d}_{OLS}$, (iii) the shifted OLS estimator $\hat{d}_{sOLS}$, and (iv) the OLS estimator with \cite{gabaix_ibragimov2011rank} correction $\hat{d}_{OLS-GI}$.
	
	\subsection{Distribution of US cities by population} \label{sec:emp_exs_UScities}

	Our first empirical assessment begins with Figure \ref{fig:SFemp_UScities_shade}, which plots the empirical survival function for US cities' populations. It shows a Pareto shape, at least when we disregard smaller cities and the inevitable noisy behavior in the empirical upper tail. The estimates of $d$ for all possible supports are shown in Figure \ref{fig:d_UScities_all} in Appendix \ref{sec:app_emp_exs_UScities}. However, the estimates vary so much across these supports that it is difficult to discern the differences between the estimators. Given that, we focus here only on the support's lower bounds depicted in the shaded area of Figure \ref{fig:SFemp_UScities_shade}. The estimates for these supports are illustrated in Figure \ref{fig:d_UScities_selec}. 
	
	As can be seen, the results vary with the estimation support, but the shifted MLE line is much more noisy, indicating that its estimates are more sensitive to the sample choice. This is consistent with our reasoning in Section \ref{sec:formal_results}. Furthermore, the shifted OLS and the OLS with \cite{gabaix_ibragimov2011rank} correction show very similar results across all these supports. In contrast, the plain OLS method produces higher estimates than the other two OLS-based methods, particularly in smaller samples, reflecting their bias adjustments. Additionally, note that all estimators' results decrease as the support narrows, with ML estimates consistently higher than OLS results, which may suggest that the distribution is regularly varying rather than strictly Pareto. Finally, it is also worth mentioning that the point estimates of $d$ are consistently around one, suggesting $k = 1/d \approx 1$. This result aligns with the literature (e.g., \citealt{gabaix1999zipf}, \citealt{gabaix_ioannides2004evolution}, \citealt{gabaix2009power}, and \citealt{gabaix2016power}) and is known as Zipf's law, named after the linguist George Kingsley Zipf, who observed similar empirical regularity in the frequency of word usage across different languages and countries (\citealt{zipf1949human}).
		
	\begin{figure}[p]
		\centering
		\includegraphics[scale=1]{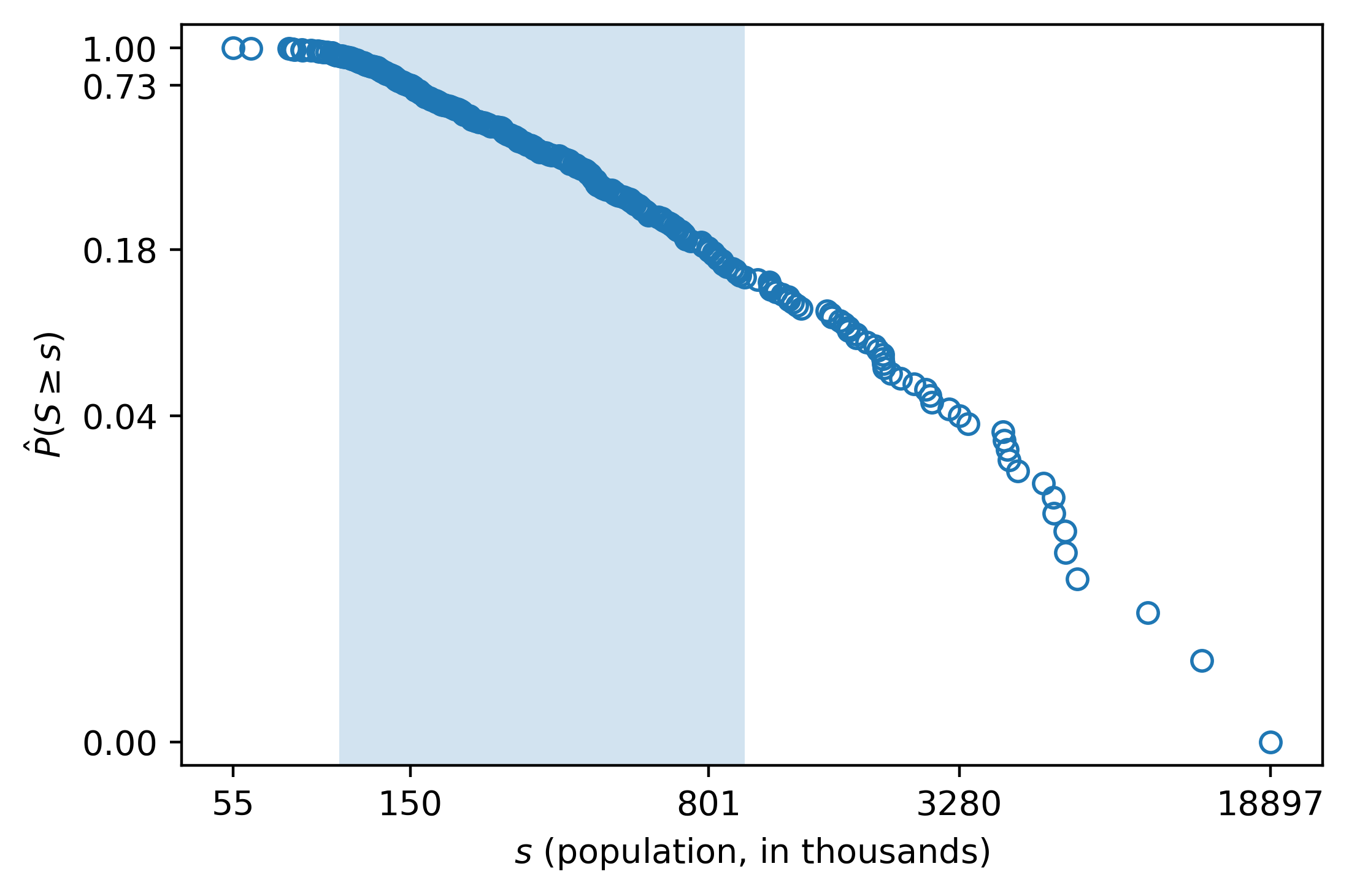}
		\caption{Empirical survival function for US cities' populations (axes in logarithmic scale)}
		\label{fig:SFemp_UScities_shade}
	\end{figure}

	\begin{figure}[p]
		\centering
		\includegraphics[scale=1]{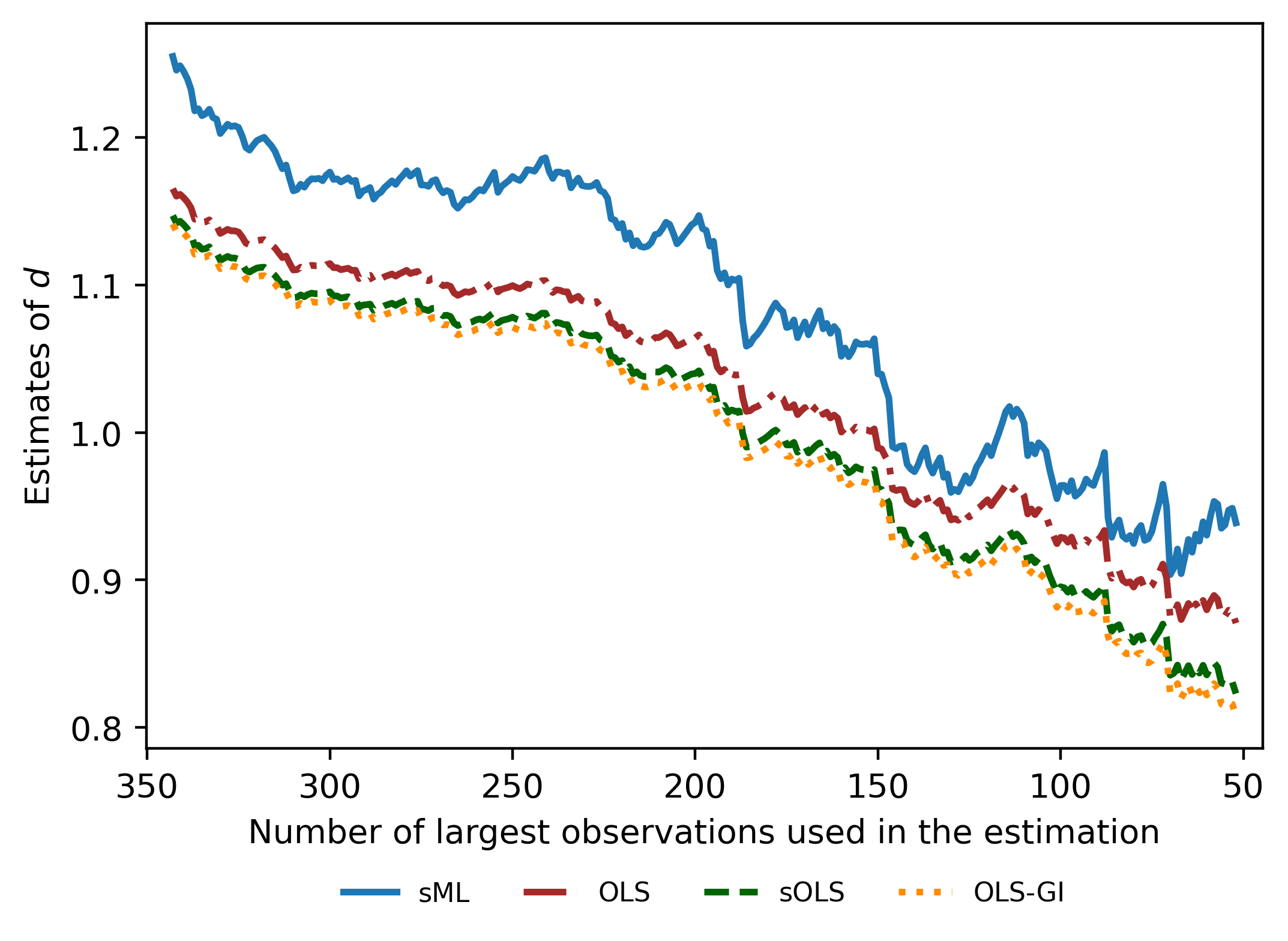}
		\caption{Estimates of $d$ for US cities' populations, selected supports}
		\label{fig:d_UScities_selec}
	\end{figure}
	
	\subsection{Distribution of S\&P absolute daily returns} \label{sec:emp_exs_S&Pr}
	
	We now move on to our second empirical exercise, evaluating S\&P absolute daily returns. Figure \ref{fig:SFemp_SPr_shade} presents the empirical survival function, which again exhibits a Pareto shape if we ignore low absolute returns and the noisy empirical behavior in the upper tail. As in the previous exercise, we focus on selected support's lower bounds, as identified in the shaded area of Figure \ref{fig:SFemp_SPr_shade}. Figure \ref{fig:d_SPr_selec} shows the corresponding estimates of $d$.\footnote{For results across all possible supports, please refer to Appendix \ref{sec:app_emp_exs_S&Pr}, Figure \ref{fig:d_SPr_all}.} Some of the conclusions from the previous exercise still apply. Firstly, the results vary with the estimation support, but this variation is particularly pronounced for the MLE, as its line is much more noisy. Secondly, the estimates of the shifted OLS and the OLS with \cite{gabaix_ibragimov2011rank} correction are very similar, while the plain OLS estimates are consistently higher, especially in smaller samples. 
	
	However, some other conclusions no longer hold. First, even though the estimates show an overall downward trend, this decrease starts to level off near the smallest sample sizes considered in Figure \ref{fig:d_SPr_selec}, when the shifted ML, shifted OLS, and OLS with \cite{gabaix_ibragimov2011rank} correction yield practically identical results. This may suggest that the distribution is strictly Pareto above some threshold. Second, the point estimates of $d$ are not close to one; instead, they are around 0.4, reaching levels near 1/3 for the narrowest supports evaluated here. This indicates $k = 1/d \approx 3$, which is known as the cubic law of stock price fluctuations. This law appears to apply to the tail distribution of short-term (15 seconds to a few days) returns, for positive and negative returns separately, across different stock sizes and countries (see, e.g., \citealt{gopikrishnan_etal1999scaling}, \citealt{gabaix2009power}, and \citealt{gabaix2016power}). 

	\begin{figure}[p]
		\centering
		\includegraphics[scale=1]{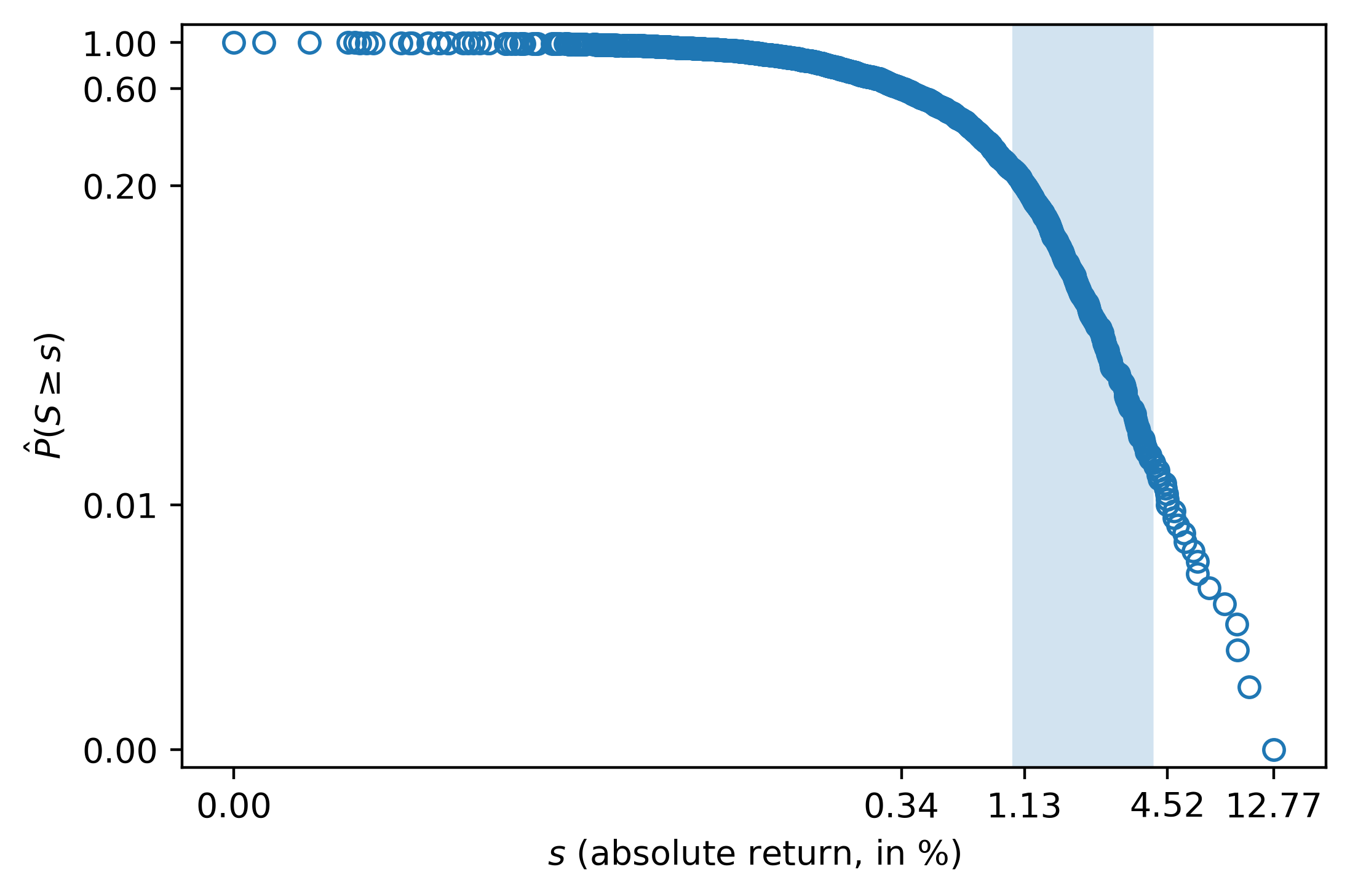}
		\caption{Empirical survival function for S\&P absolute daily returns (axes in logarithmic scale)}
		\label{fig:SFemp_SPr_shade}
	\end{figure}

	\begin{figure}[p]
		\centering
		\includegraphics[scale=1]{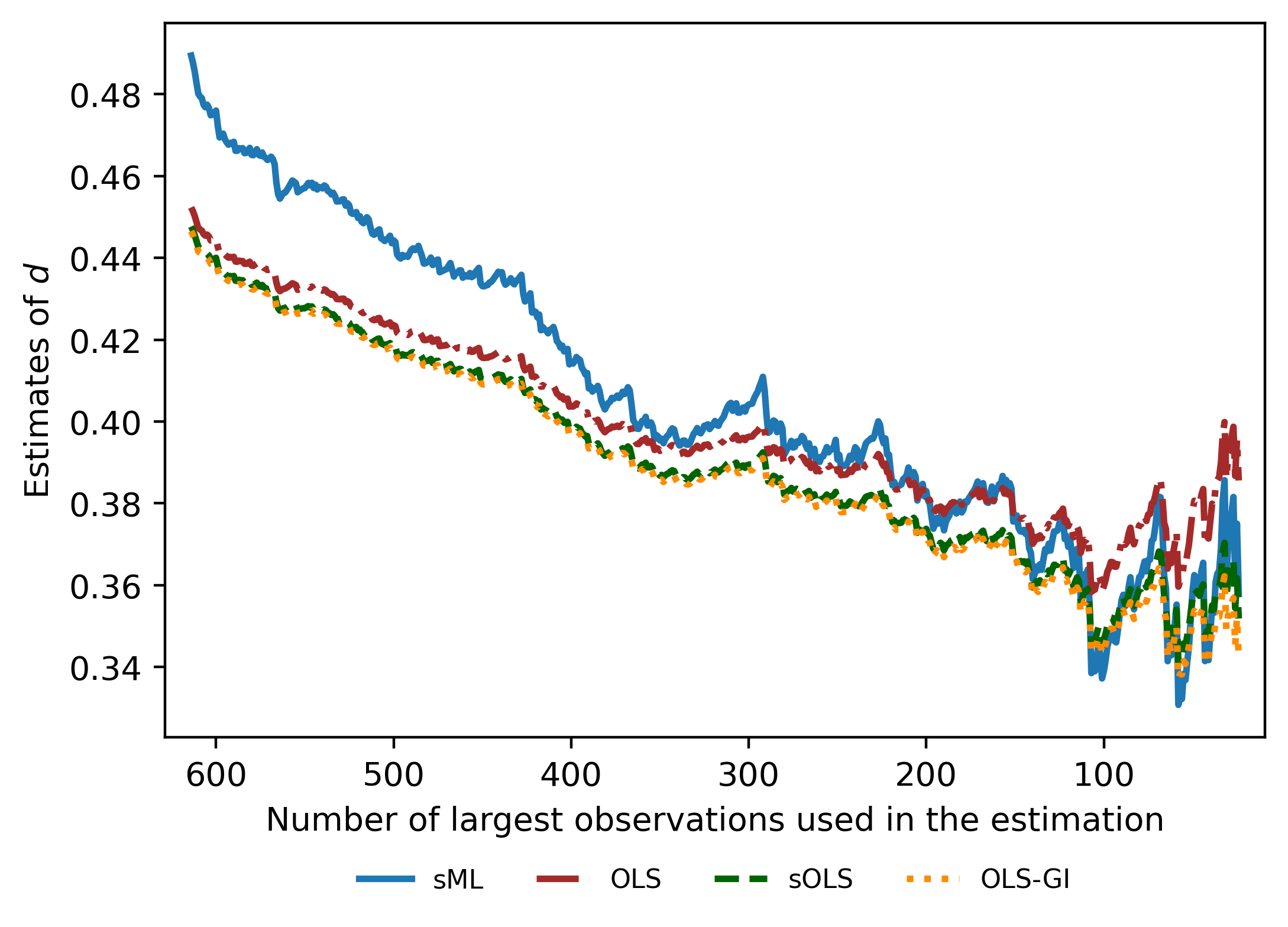}
		\caption{Estimates of $d$ for S\&P absolute daily returns, selected supports}
		\label{fig:d_SPr_selec}
	\end{figure}
	
	\section{Conclusion} \label{sec:conclusion}
		
	In this paper, we demonstrated two important theoretical results for the estimation of tail exponents. Firstly, we established that the estimator of $d$ from the OLS rank-size regression is also unbiased with a small-sample correction. Secondly, we proved that there is a close numerical relationship between ML and OLS methods. From this result, we showed that these estimators differ numerically mainly because MLE assigns greater weight to smaller observations, particularly in large samples.
		
	We argued that the shifted MLE should be used when the distribution under study is strictly Pareto. Although our shifted OLS estimator is unbiased according to our first theoretical result, the MLE method is both BLUE and MVUE. However, if the distribution is strictly Pareto only in the upper tail or regularly varying, OLS methods may be preferable, especially in large samples. This recommendation is based on our second theoretical result. Specifically, in these Pareto-like situations, the MLE has higher bias because it assigns greater weight to the smallest, non-Pareto, observations, as illustrated by our Monte Carlo simulations. Additionally, the MLE is less robust to the choice of estimation support, a fact observed in our real-world applications. Thus, OLS provides more stable estimates across different supports, which is a valuable feature when dealing with the uncertainty of the appropriate sample.
			
	In practice, distinguishing between a strictly Pareto distribution and a Pareto-like distribution can be nearly impossible, making it difficult to determine which estimator is most appropriate. This is why you should \textit{also} use OLS estimation of tail exponents.

\section*{Acknowledgements}

DOC thanks to CNPq for partial financial support (304706/2023-0).

\section*{Disclaimer}

The views expressed in this paper are those of the authors and should not be interpreted as representing the positions of the Banco Central do Brasil or its board members.

	\clearpage
	\begin{appendices}
		\appendix
		\counterwithin{figure}{section}
		\counterwithin{equation}{section}
		\counterwithin{table}{section}
		\counterwithin{proposition}{section}
		
		\section{Additional theoretical results} \label{sec:app_formal_results}
		
		\subsection{OLS with intercept estimation} \label{sec:app_formal_results_d_OLSc}
		
		Consider the regression	equation
		\begin{align} 
			\ln \left(S_{(n)}/S_{(i)}\right) = & \alpha + d\ln (i/n) + \varepsilon_i \label{eq:reg_eq_d_c}
		\end{align}
		for $i=1,2,...,n$, where $\varepsilon$ is an empirical error and both $\alpha$ and $d$ are unknown parameters. Denote by $\hat{d}_{OLS_c}$ the OLS estimator of $d$ in \eqref{eq:reg_eq_d_c}, that is,
		\begin{align} 
			\hat{d}_{OLS_c} = & \frac{n^{-1} \sum_{i=1}^n \ln (i/n) \ln(S_{(n)}/S_{(i)})-\left[n^{-1} \sum_{i=1}^n  \ln (i/n) \right]\left[n^{-1} \sum_{i=1}^n \ln(S_{(n)}/S_{(i)})\right]}{n^{-1} \sum_{i=1}^n \left[\ln (i/n)\right]^2-\left[ n^{-1} \sum_{i=1}^n \ln (i/n) \right]^2} \notag \\
			\hat{d}_{OLS_c} = & \frac{-\sum_{i=1}^n \ln (i/n) \ln(S_{(i)}/S_{(n)})+\left[\sum_{i=1}^n  \ln (i/n) \right]\left[n^{-1} \sum_{i=1}^n \ln(S_i/S_{(n)})\right]}{\sum_{i=1}^n \left[\ln (i/n)\right]^2-n^{-1} \left[\sum_{i=1}^n \ln (i/n) \right]^2} \notag \\
			\hat{d}_{OLS_c} = & \frac{\left[\sum_{i=1}^n  \ln (i/n) \right]\left[n^{-1} \sum_{i=1}^n \ln(S_i/\underline{s})\right]-\sum_{i=1}^n \ln (i/n) \ln(S_{(i)}/\underline{s})}{\sum_{i=1}^n \left[\ln (i/n)\right]^2-n^{-1} \left[\sum_{i=1}^n \ln (i/n) \right]^2} , \label{eq:d_OLSc}
		\end{align}	
		where in the last line we use $-\sum_{i=1}^n \ln (i/n) \ln(\underline{s}/S_{(n)})+\left[\sum_{i=1}^n  \ln (i/n) \right]\left[n^{-1} \sum_{i=1}^n \ln(\underline{s}/S_{(n)})\right] = \left[\sum_{i=1}^n  \ln (i/n) \right] \left[-\ln(\underline{s}/S_{(n)})+n^{-1}n\ln(\underline{s}/S_{(n)})\right] = 0$.
		
		Given this estimator, the following result holds:
		
		\begin{proposition} \label{prop:d_OLSc_bias}
			The shifted OLS estimator with intercept estimation $\hat{d}_{sOLS_c} \equiv h(n) \hat{d}_{OLS_c}$ is unbiased for
			\begin{align} 
				h(n) \equiv \frac{\sum_{i=1}^{n} \left[\ln (i/n)\right]^2 - n^{-1}\left[\sum_{i=1}^{n} \ln (i/n)\right]^2}{\sum_{i=1}^{n} \ln (i/n)-\sum_{j=1}^{n} j^{-1} \sum_{i=1}^j \ln (i/n)} . \notag
			\end{align}		
		\end{proposition}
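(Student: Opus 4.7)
The plan is to mimic the proof of Proposition \ref{prop:d_OLS_bias}: compute $\E(\hat{d}_{OLS_c})$ directly from the closed form in \eqref{eq:d_OLSc} and the two moments $\E\!\left[\ln(S_i/\underline{s})\right]=d$ and $\E\!\left[\ln(S_{(i)}/\underline{s})\right]=\sum_{j=i}^{n}(d/j)$, then show the resulting expression equals $d/h(n)$, so that $\E(\hat{d}_{sOLS_c})=h(n)\cdot d/h(n)=d$.

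Concretely, the denominator of \eqref{eq:d_OLSc} is nonstochastic and already matches the numerator of $h(n)$, so no work is needed there. For the stochastic numerator, I would first apply expectation term by term. The piece $\left[\sum_{i=1}^{n}\ln(i/n)\right]\left[n^{-1}\sum_{i=1}^{n}\ln(S_i/\underline{s})\right]$ has expectation $d\sum_{i=1}^{n}\ln(i/n)$, since the $S_i$ are i.i.d.\ with $\E\!\left[\ln(S_i/\underline{s})\right]=d$ and the $n^{-1}$ cancels the sum of $n$ copies of $d$. The piece $\sum_{i=1}^{n}\ln(i/n)\ln(S_{(i)}/\underline{s})$ has expectation $d\sum_{i=1}^{n}\ln(i/n)\sum_{j=i}^{n}(1/j)$.

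The one non-mechanical step is the swap of summation order on this last double sum: since $\{(i,j):1\le i\le n,\ i\le j\le n\}=\{(i,j):1\le j\le n,\ 1\le i\le j\}$, I get
\begin{equation*}
\sum_{i=1}^{n}\ln(i/n)\sum_{j=i}^{n}\frac{1}{j}=\sum_{j=1}^{n}\frac{1}{j}\sum_{i=1}^{j}\ln(i/n).
\end{equation*}
Collecting the two pieces in the numerator gives $d\bigl[\sum_{i=1}^{n}\ln(i/n)-\sum_{j=1}^{n}j^{-1}\sum_{i=1}^{j}\ln(i/n)\bigr]$, which is exactly $d$ times the denominator of $h(n)$. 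Hence $\E(\hat{d}_{OLS_c})=d/h(n)$ and multiplying by $h(n)$ yields unbiasedness of $\hat{d}_{sOLS_c}$.

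I do not expect a real obstacle: the index swap is the only nontrivial manipulation, and it is the same trick that already appears implicitly in Proposition \ref{prop:d_OLS_bias}. The only care needed is bookkeeping—noting that the identity in the paragraph immediately before the proposition (showing the $\ln\underline{s}/S_{(n)}$ terms cancel) has already reduced the problem to a form involving $\ln(S_i/\underline{s})$ and $\ln(S_{(i)}/\underline{s})$, so the two required moments apply cleanly and no additional manipulation of the rank-$n$ statistic is needed.
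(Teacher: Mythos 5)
Your proposal is correct and follows essentially the same route as the paper: take expectations of the numerator of \eqref{eq:d_OLSc} using the two exponential-order-statistic moments, swap the order of summation in the double sum, and recognize the result as $d$ times the reciprocal of $h(n)$. The paper's proof leaves the index swap implicit, but it is exactly the manipulation you describe, so there is no substantive difference.
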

		\begin{proof}
			From Equation \eqref{eq:d_OLSc} and given that $\E\left[ \ln \left(S_i / \underline{s}\right) \right] = d$ and $\E\left[ \ln \left(S_{(i)} / \underline{s}\right) \right] = \sum_{j=i}^n (d/j)$,
			\begin{align} 				
				\E\left(\hat{d}_{OLS_c}\right) = & \frac{\left[\sum_{i=1}^n  \ln (i/n) \right]\left[n^{-1} \sum_{i=1}^n \E\left(\ln(S_i/\underline{s})\right)\right]-\sum_{i=1}^n \ln (i/n) \E\left(\ln(S_{(i)}/\underline{s})\right)}{\sum_{i=1}^n \left[\ln (i/n)\right]^2-n^{-1} \left[\sum_{i=1}^n \ln (i/n) \right]^2} \notag \\
				\E\left(\hat{d}_{OLS_c}\right) = & \frac{\left[\sum_{i=1}^n  \ln (i/n) \right]d-\sum_{i=1}^n \ln (i/n) \left[\sum_{j=i}^n (d/j)\right]}{\sum_{i=1}^n \left[\ln (i/n)\right]^2-n^{-1} \left[\sum_{i=1}^n \ln (i/n) \right]^2} = \frac{d}{h(n)} . \notag
			\end{align}	
			As a consequence, $\E \left(\hat{d}_{sOLS_c} \right) = h(n) \E \left(\hat{d}_{OLS_c} \right) = d$.
		\end{proof}	
		\bigskip
		
		Figure \ref{fig:fn_gn_hn} plots the bias correction for the OLS \textit{with} intercept estimation, $h(n)$, comparing it to the bias corrections $f(n)$ and $g(n)$ for ML and OLS, respectively. As can be seen, the OLS adjustment is more severe when an intercept is estimated, meaning that $\hat{d}_{OLS_c}$ has higher bias than $\hat{d}_{OLS}$ in small samples.
		
		\begin{figure}[h!]
			\centering
			\includegraphics[scale=1]{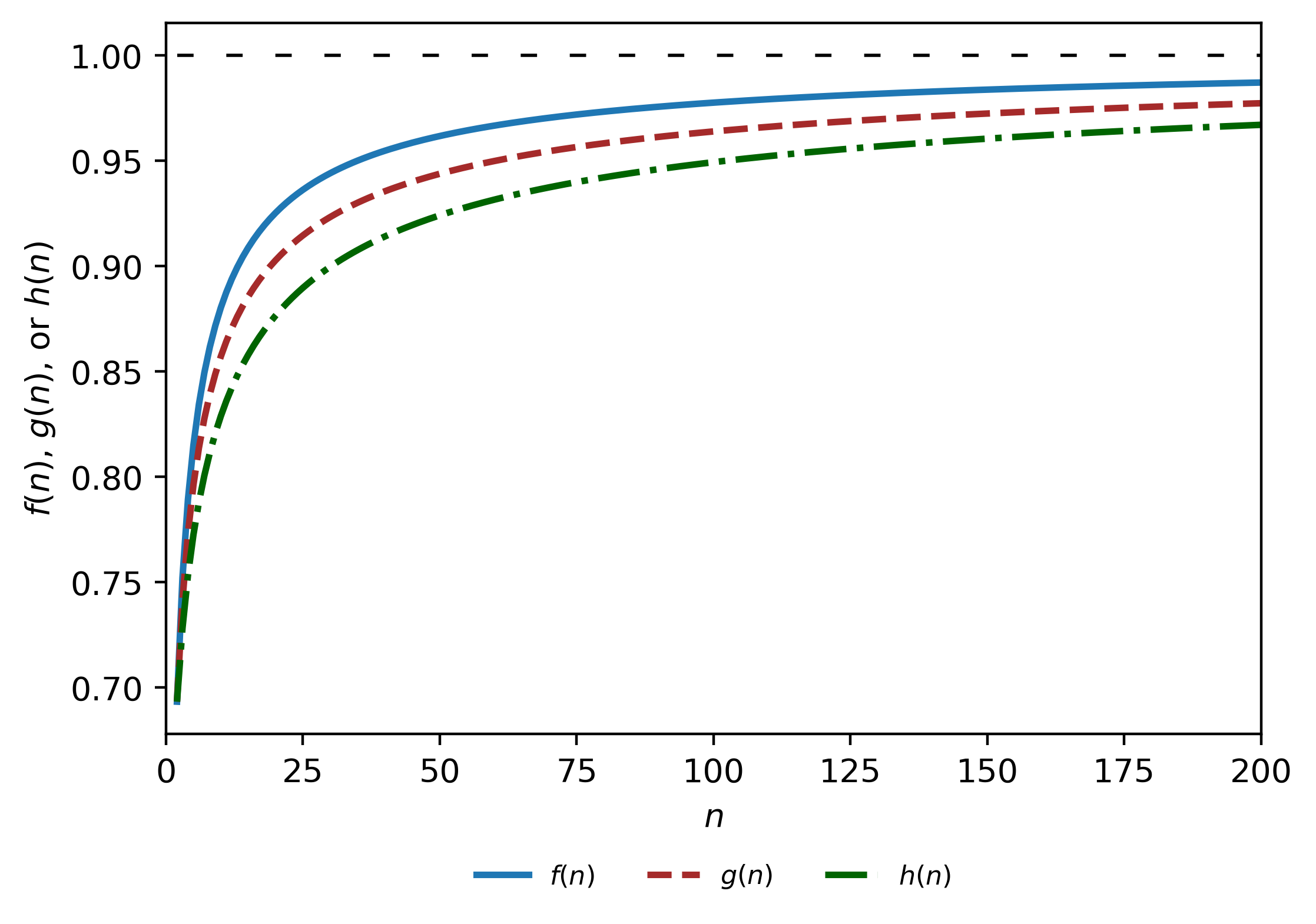}
			\caption{ML bias correction $f(n)$, OLS bias correction $g(n)$, and OLS with intercept estimation bias correction $h(n)$}
			\label{fig:fn_gn_hn}
		\end{figure}
		
		\subsection{Tail exponent estimation} \label{sec:app_formal_results_k}			
		
		Consider the problem of estimating the tail exponent $k = 1/d$. Naturally, the MLE of $k$ is $\hat{k}_{ML} =  1/\hat{d}_{ML}$, where $\hat{d}_{ML} = \frac{n-1}{n} \hat{d}_{sML}$ is the original, non-shifted, MLE from \cite{hill1975simple}. Let $\hat{k}_{OLS}$ be the OLS estimator from regressing $\ln \hat{P}(S\geq S_{(i)})=\ln (i/n)$ on $\ln \left(S_{(n)}/S_{(i)}\right)$. Formally, $\hat{k}_{OLS}$ is based on the regression equation
		\begin{align} 
			 \ln (i/n) = & k \ln \left(S_{(n)}/S_{(i)}\right)  + \psi_i \label{eq:reg_eq_k}
		\end{align}
		for $i=1,2,...,n-1$, where $\psi$ is an empirical error. Hence,
		\begin{align} 
			\hat{k}_{OLS} = & \frac{\sum_{i=1}^{n-1} \ln (i/n) \ln(S_{(n)}/S_{(i)})}{\sum_{i=1}^{n-1} \left[\ln (S_{(n)}/S_{(i)})\right]^2} . \label{eq:k_OLS}
		\end{align}			
		
		Given that, the following result holds:
		
		\begin{proposition} \label{prop:k_ML_WLS}
			Denote by $\hat{k}_{WLS}$ the WLS estimator of $k$ on \eqref{eq:reg_eq_k} with weights $\tilde{w}_i = 1/\ln \left(S_{(i)}/S_{(n)}\right)$ for $i=1,2,...,n-1$. Then, $\hat{k}_{ML} = \frac{n/(n-1)}{f(n)} \hat{k}_{WLS}$ .
		\end{proposition}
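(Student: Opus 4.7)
The plan is to compute $\hat{k}_{WLS}$ directly from the textbook WLS formula, exploit the cancellation built into the weights $\tilde{w}_i = 1/\ln(S_{(i)}/S_{(n)})$, and then rewrite the resulting expression in terms of $\hat{d}_{sML}$ and $f(n)$. The argument will mirror the proof of Proposition \ref{prop:d_ML_WLS}, only with the roles of the explanatory and response variables swapped.

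First, I would write down the WLS estimator for the no-intercept regression \eqref{eq:reg_eq_k}:
\[
\hat{k}_{WLS} = \frac{\sum_{i=1}^{n-1} \tilde{w}_i \ln(i/n) \ln(S_{(n)}/S_{(i)})}{\sum_{i=1}^{n-1} \tilde{w}_i \left[\ln(S_{(n)}/S_{(i)})\right]^2}.
\]
Since $\tilde{w}_i = 1/\ln(S_{(i)}/S_{(n)}) = -1/\ln(S_{(n)}/S_{(i)})$, one factor of $\ln(S_{(n)}/S_{(i)})$ cancels in both the numerator and the denominator (with matching minus signs), leaving
\[
\hat{k}_{WLS} = \frac{-\sum_{i=1}^{n-1} \ln(i/n)}{\sum_{i=1}^{n-1} \ln(S_{(i)}/S_{(n)})}.
\]

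Next, I would identify both sums with objects already in the paper. By the definition of $f(n)$ in Proposition \ref{prop:d_ML_WLS}, the numerator equals $(n-1)f(n)$. By \eqref{eq:d_sMLE}, the denominator equals $(n-1)\hat{d}_{sML}$. Cancelling the common factor $(n-1)$ gives $\hat{k}_{WLS} = f(n)/\hat{d}_{sML}$. Finally, using $\hat{k}_{ML} = 1/\hat{d}_{ML}$ with $\hat{d}_{ML} = \tfrac{n-1}{n}\hat{d}_{sML}$, I obtain
\[
\hat{k}_{ML} \;=\; \frac{n}{(n-1)\,\hat{d}_{sML}} \;=\; \frac{n/(n-1)}{f(n)}\,\hat{k}_{WLS},
\]
which is the claim.

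There is no real obstacle: the identity is essentially algebraic, reducing to the same telescoping cancellation that drives Proposition \ref{prop:d_ML_WLS}. The only point that needs a moment of care is the sign bookkeeping when rewriting $\tilde{w}_i$ as $-1/\ln(S_{(n)}/S_{(i)})$, since the two resulting sign flips in numerator and denominator must be tracked so that the final ratio comes out positive, consistent with $\hat{k}_{ML}>0$ and $\hat{k}_{WLS}>0$.
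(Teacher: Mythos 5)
Your proposal is correct and follows essentially the same route as the paper's own proof: write out the no-intercept WLS formula, cancel one factor of $\ln(S_{(n)}/S_{(i)})$ via the weights, identify the numerator with $(n-1)f(n)$ and the denominator with $(n-1)\hat{d}_{sML}$, and convert using $\hat{k}_{ML}=1/\hat{d}_{ML}$ with $\hat{d}_{ML}=\tfrac{n-1}{n}\hat{d}_{sML}$. The sign bookkeeping you flag is handled identically in the paper, so there is nothing to add.
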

		\begin{proof}
			Applying WLS with weights $\tilde{w}_i = 1/\ln \left(S_{(i)}/S_{(n)}\right)$ on \eqref{eq:reg_eq_k},
			\begin{align} 
				\hat{k}_{WLS} = & \frac{\sum_{i=1}^{n-1} \tilde{w}_i \ln (i/n) \ln \left(S_{(n)}/S_{(i)}\right)}{\sum_{i=1}^{n-1} \tilde{w}_i \left[\ln \left(S_{(n)}/S_{(i)}\right)\right]^2} =\frac{-(n-1)^{-1} \sum_{i=1}^{n-1} \ln (i/n)}{(n-1)^{-1} \sum_{i=1}^{n-1} \ln\left(S_{(i)}/S_{(n)}\right)} = \frac{f(n)}{\hat{d}_{sML}} = \frac{f(n)}{\frac{n}{n-1} \hat{d}_{ML}} \notag \\
				\therefore \hat{k}_{ML} = & \frac{1}{\hat{d}_{ML}}  = \frac{n/(n-1)}{f(n)} \hat{k}_{WLS} , \notag
			\end{align}
			where, in the first line, we use $f(n) \equiv - (n-1)^{-1} \sum_{i=1}^{n-1} \ln (i/n)$, as defined in Proposition \ref{prop:d_ML_WLS}, $\hat{d}_{sML} = (n-1)^{-1} \sum_{i=1}^{n-1} \ln\left(S_{(i)}/S_{(n)}\right)$, from Equation \eqref{eq:d_sMLE}, and $\hat{d}_{ML} = \frac{n-1}{n} \hat{d}_{sML}$.
		\end{proof}	
		\bigskip
				
		Therefore, similar to the estimation of $d$, in large samples, the numerical difference between OLS and ML estimators is due to weighting, with the MLE giving more importance to smaller observations. Figure \ref{fig:k_weights} plots ML and OLS observations' weights normalized to add up to one for $n=3,10,100,10000$. Since the weights are random in this case, the results shown are the average over 10,000 samples drawn from a Pareto distribution with $k=1$ and $\underline{s} = 50$.\footnote{In any case, the results do not seem to be sensitive to these parameters.} Similar to what we have seen for $d$ in Figure \ref{fig:d_weights}, the MLE of $k$ assigns significantly greater weight to observations at begging of the support, especially for larger samples. Once more, for $n=10000$, the 20\% smallest observations have roughly 90\% of the weight!
										
		\begin{figure}[h!]
			\centering
			\includegraphics[scale=1]{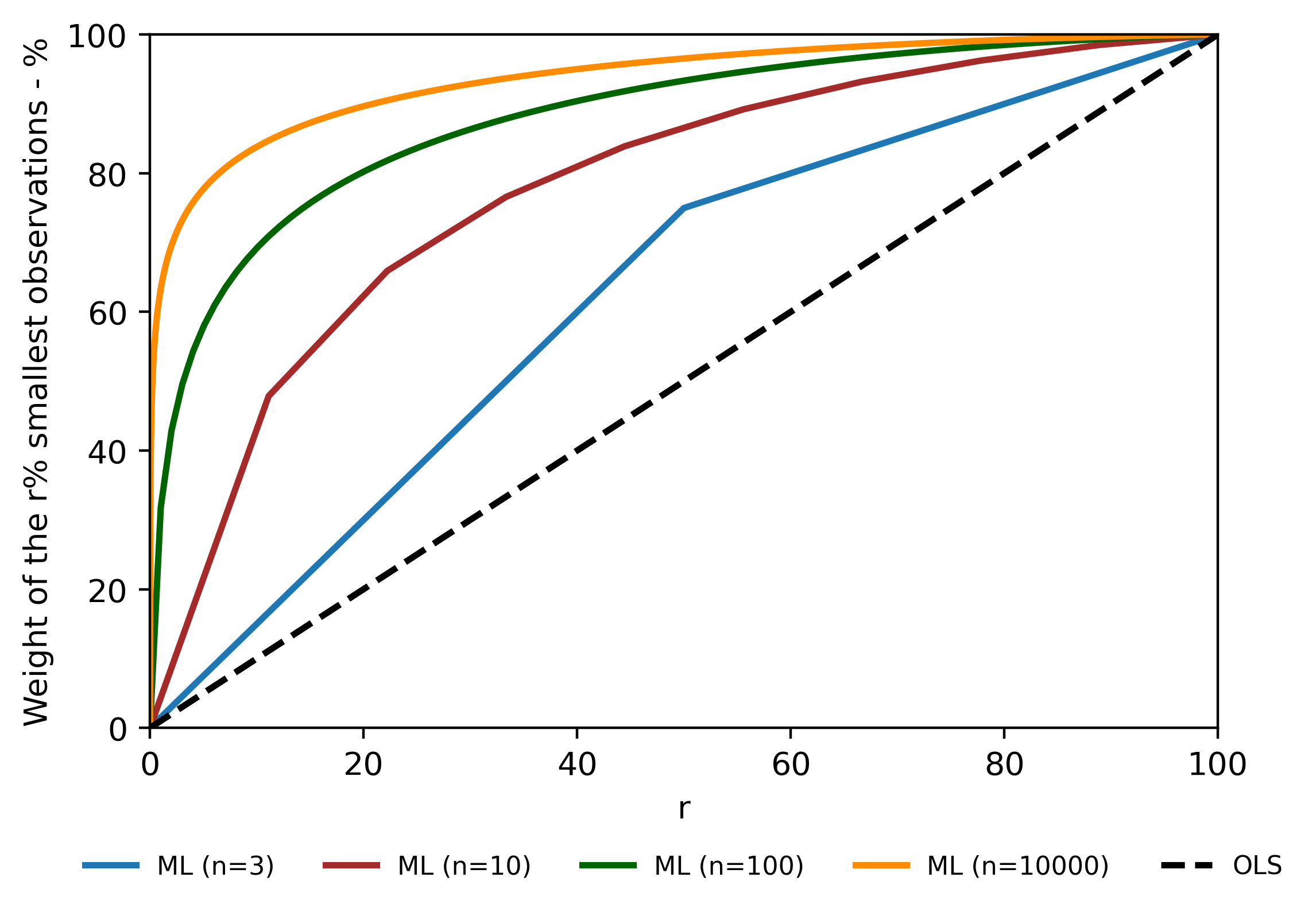}
			\caption{Observations' weights for $k$ estimators: ML \textit{versus} OLS}
			\label{fig:k_weights}
		\end{figure}
	
		\clearpage
		\section{Monte Carlo exercises using other exponents} \label{sec:app_mc}
		
		\subsection{Strictly Pareto but only in the upper tail} \label{sec:app_mc1}
				
		\begin{figure}[h!]
			\centering
			\includegraphics[scale=0.84]{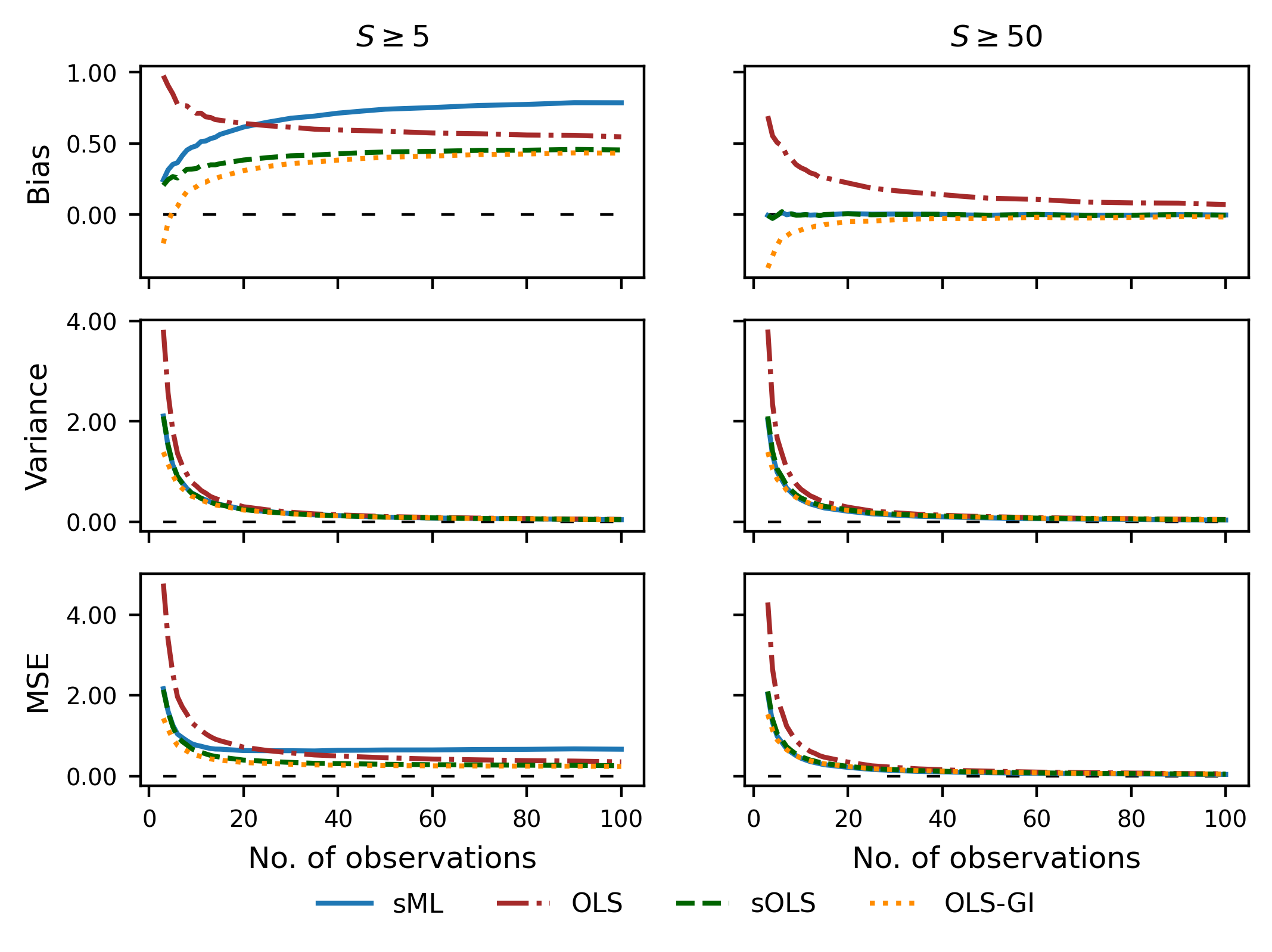}
			\caption{Performance of $d$ estimators in the first Monte Carlo exercise, $k=1/2$}
		\end{figure}
				
		\begin{figure}[h!]
			\centering
			\includegraphics[scale=0.84]{mc1_d_1.0.png}
			\caption{Performance of $d$ estimators in the first Monte Carlo exercise, $k=1$}
		\end{figure}
			
		\begin{figure}[h!]
			\centering
			\includegraphics[scale=0.84]{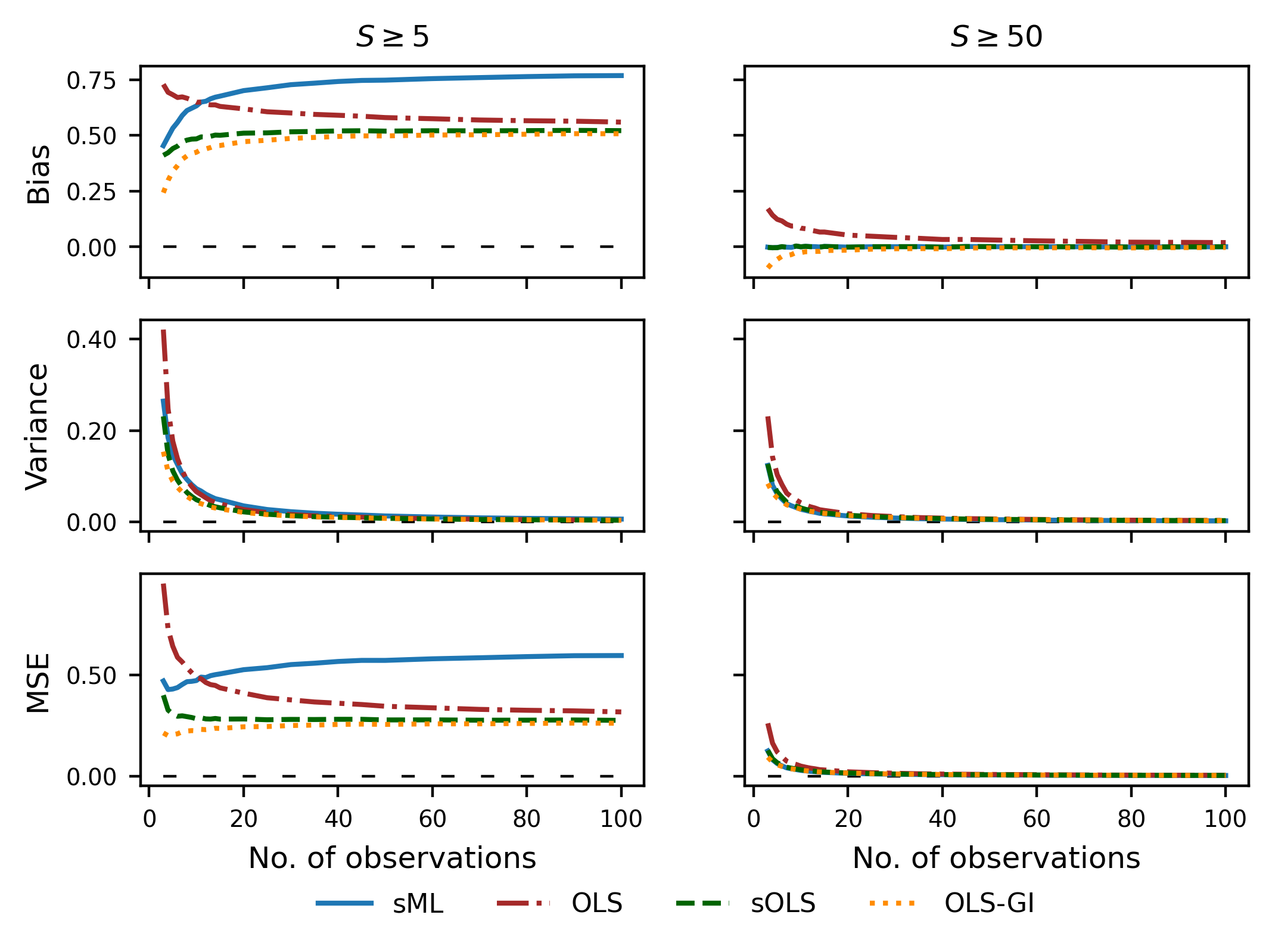}
			\caption{Performance of $d$ estimators in the first Monte Carlo exercise, $k=2$}
		\end{figure}
			
		\begin{figure}[h!]
			\centering
			\includegraphics[scale=0.84]{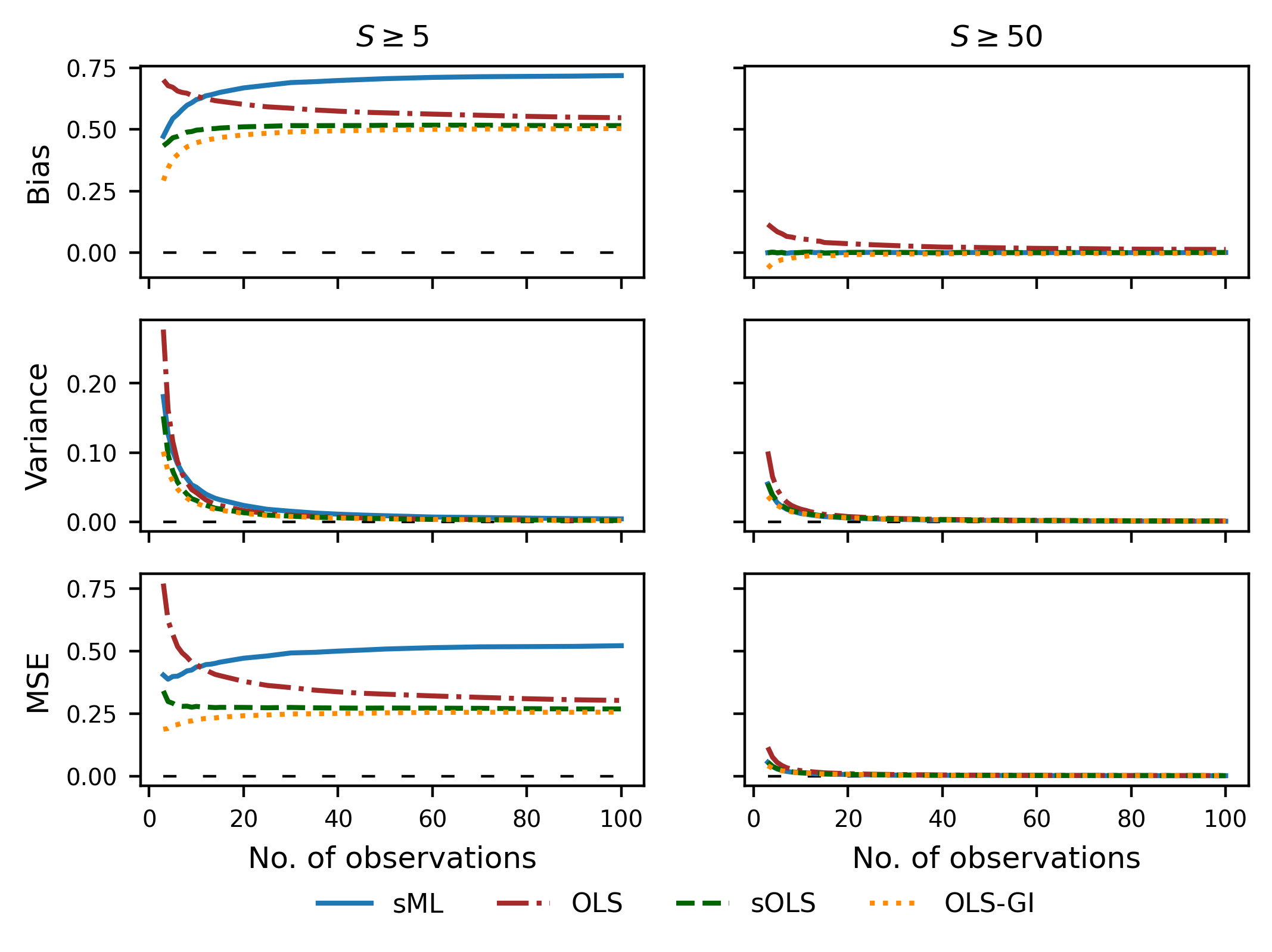}
			\caption{Performance of $d$ estimators in the first Monte Carlo exercise, $k=3$}
		\end{figure}

		\clearpage
		\subsection{Regularly varying} \label{sec:app_mc2}
			
		\begin{figure}[h!]
			\centering
			\includegraphics[scale=0.84]{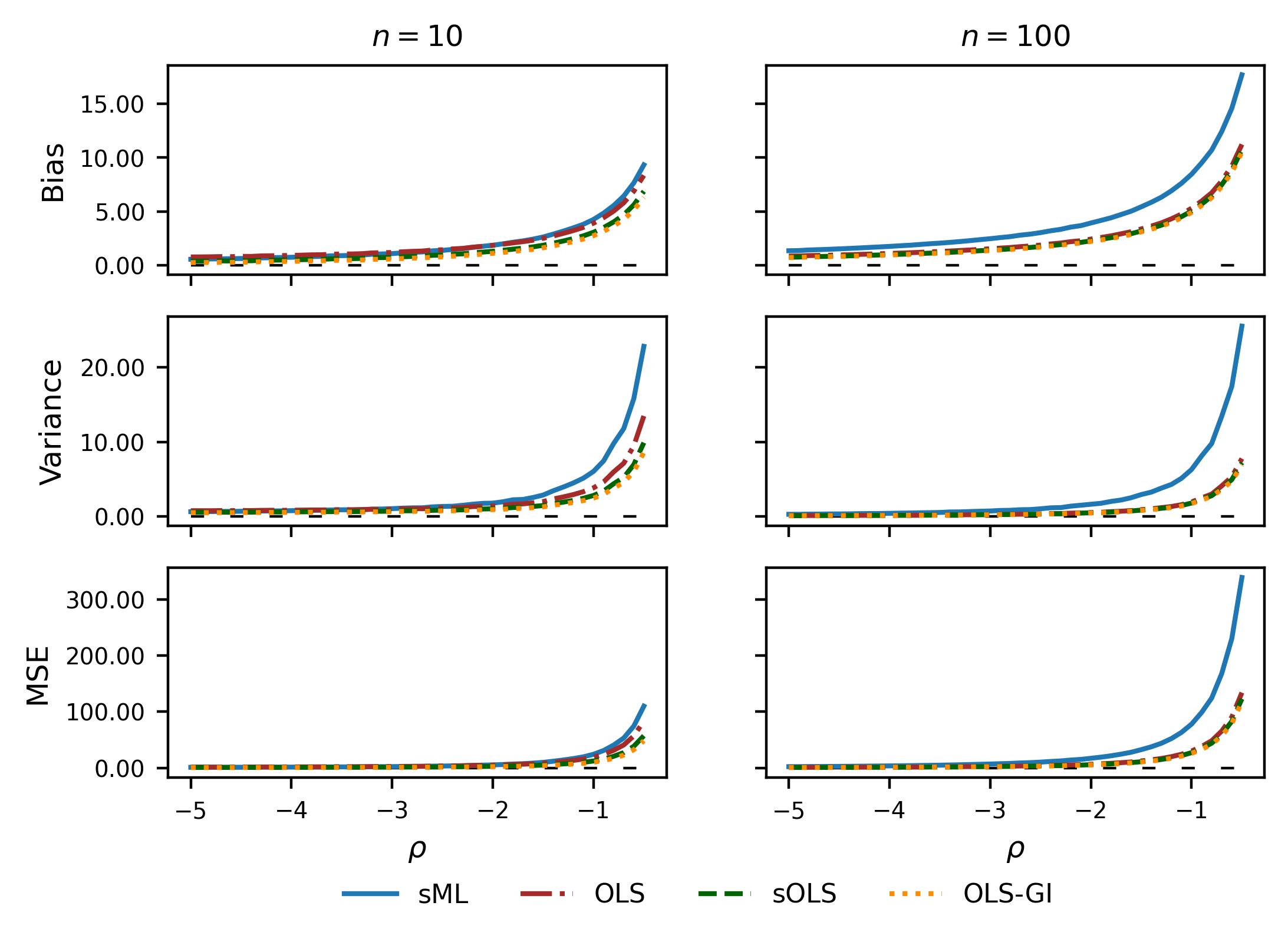}
			\caption{Performance of $d$ estimators in the second Monte Carlo exercise, $k=1/2$}
		\end{figure}
			
		\begin{figure}[h!]
			\centering
			\includegraphics[scale=0.84]{mc2_d_1.0.png}
			\caption{Performance of $d$ estimators in the second Monte Carlo exercise, $k=1$}
		\end{figure}
		
		\begin{figure}[h!]
			\centering
			\includegraphics[scale=0.84]{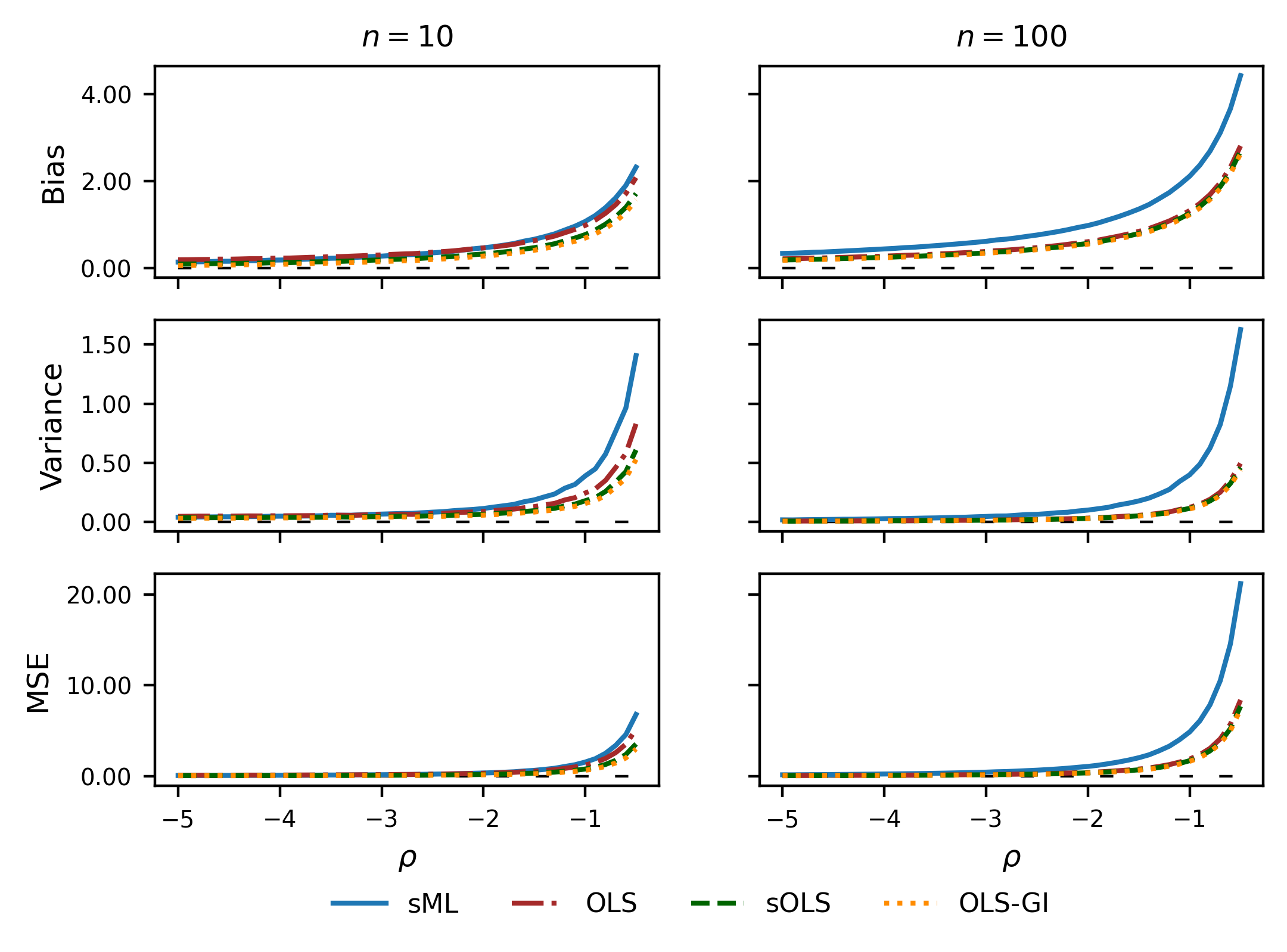}
			\caption{Performance of $d$ estimators in the second Monte Carlo exercise, $k=2$}
		\end{figure}
			
		\begin{figure}[h!]
			\centering
			\includegraphics[scale=0.84]{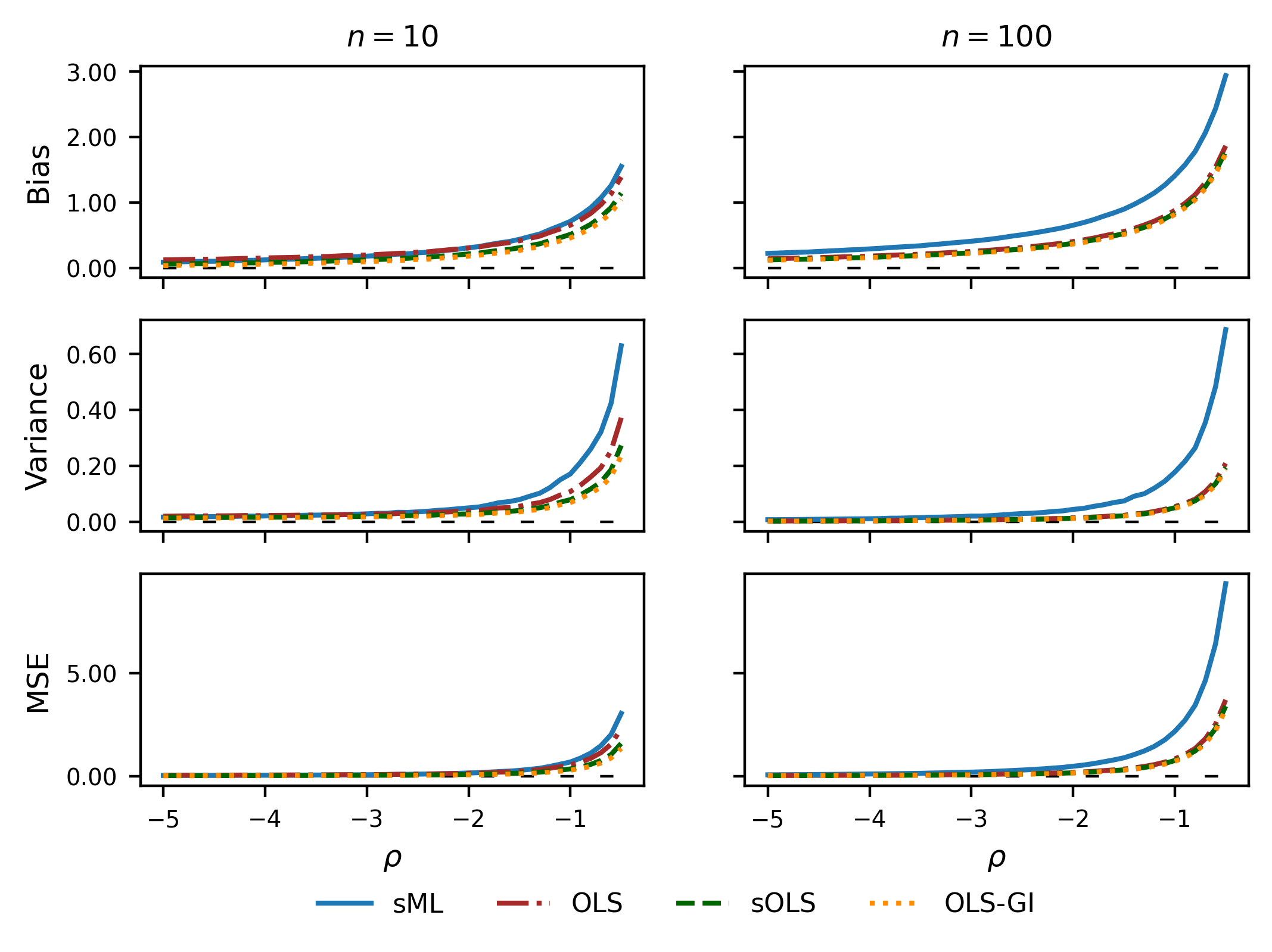}
			\caption{Performance of $d$ estimators in the second Monte Carlo exercise, $k=3$}
		\end{figure}
			
		\clearpage
		\section{Empirical applications using all possible supports} \label{sec:app_emp_exs}
				
		\subsection{Distribution of US cities by population} \label{sec:app_emp_exs_UScities}
				
		\begin{figure}[h!]
			\centering
			\includegraphics[scale=0.76]{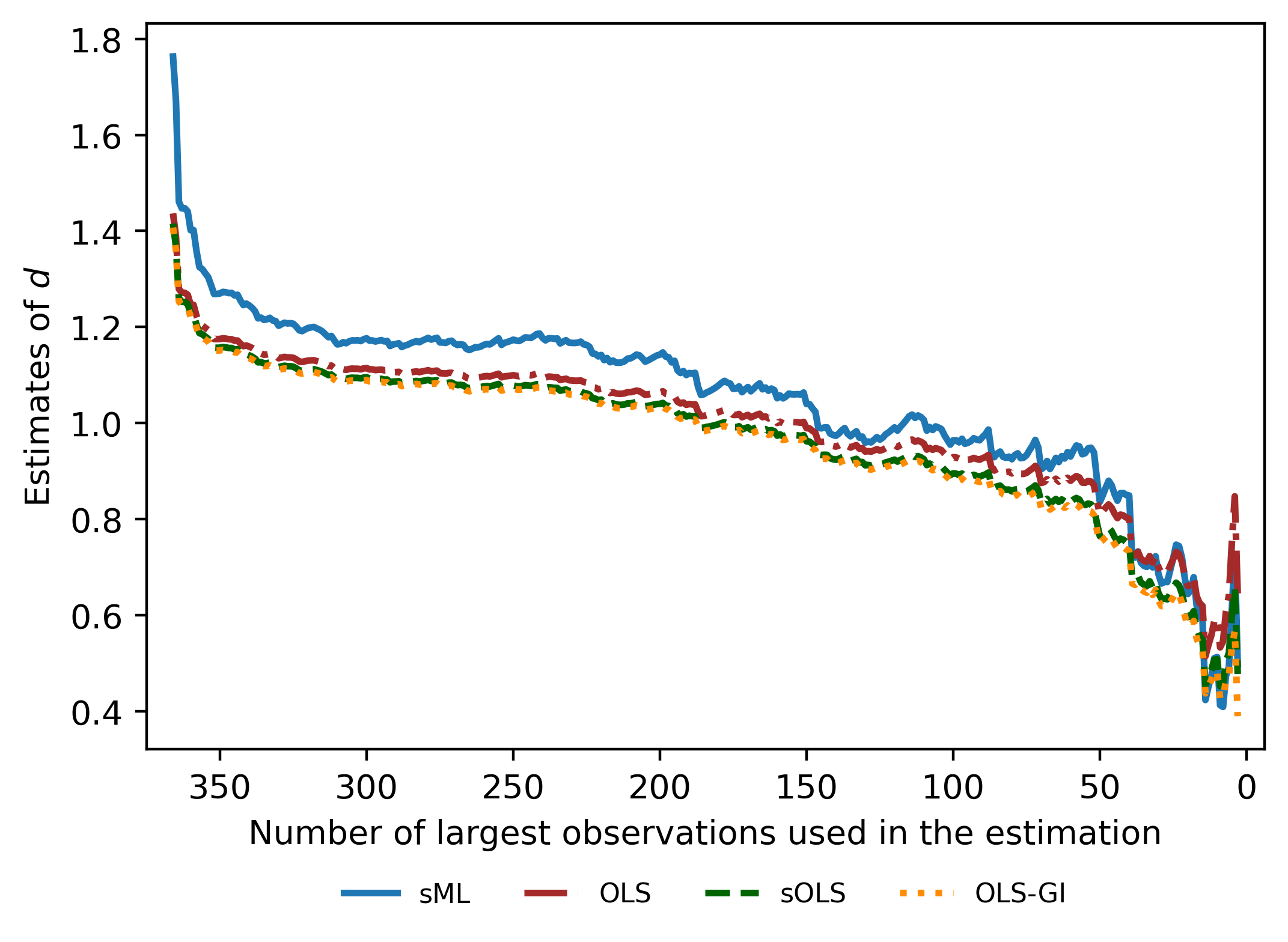}
			\caption{Estimates of $d$ for US cities' populations, all possible supports}
			\label{fig:d_UScities_all}
		\end{figure}
			
		\subsection{Distribution of S\&P absolute daily returns} \label{sec:app_emp_exs_S&Pr}

		\begin{figure}[h!]
			\centering
			\includegraphics[scale=0.76]{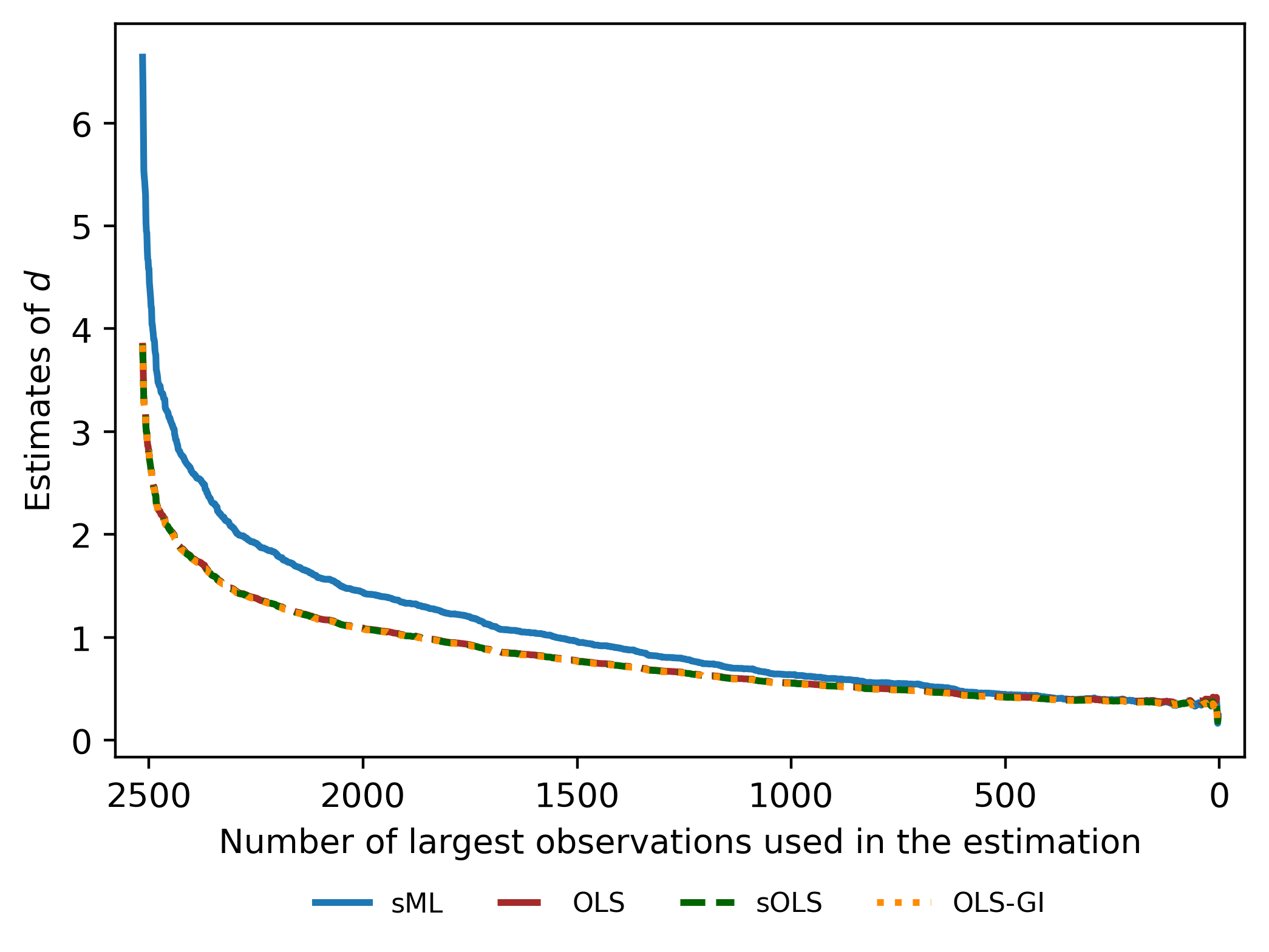}
			\caption{Estimates of $d$ for S\&P absolute daily returns, all possible supports}
			\label{fig:d_SPr_all}
		\end{figure}

	\end{appendices}
	
	\clearpage

\bibliographystyle{agsm}

\bibliography{bibliografia}
\end{document}